\pdfoutput = 1
\documentclass[11pt]{article}
\usepackage{url}
\usepackage{color}
\usepackage{graphicx}
\usepackage{amsfonts}
\usepackage{amsmath}
\usepackage{amssymb}
\usepackage{latexsym}
\usepackage{rotating}

\usepackage{colortbl}
\usepackage[table]{xcolor}
\usepackage{multirow}
\newcommand\open{\textcolor{blue}{{\bf \sc Open}}}

\definecolor{cyell}{rgb}{.9,0,9,0}

\definecolor{kgray}{rgb}{224,224,224}

\newcommand{\hide}[1]{}

\topmargin -0.2in                  % LaTeX adds an inch
\headheight 0pt                 % no headers
\headsep 0in                    % no headers
\textheight 9.2in                 % i.e., bottom margin = 1in
\textwidth 6.5in
\oddsidemargin 0in              % (8.5-textwidth)/2 = 1in
                                % LaTeX adds an inch
\evensidemargin 0in

% Blackboard R for real numbers, S for sphere (\S taken someplace, so \Sph).
%\newcommand\P{\cal P}
%\newcommand\Sph{\mathbb{S}}

%\newcommand\cone{\mathfrak{C}}
\newcommand\cone{{C}}
\newcommand\red[1]{\textcolor{red}{#1}}

\newcommand\Pt{\mathcal P}

\newcommand\pyy{p_{{\tt YY}}}
\newcommand\ang[1]{\widehat{#1}}

\newcommand{\arr}[1]{\overrightarrow{#1}}

\newtheorem{theorem}{{\bf Theorem}}

\newtheorem{lemma}[theorem]{Lemma}

\newcommand{\qed}{\rule{0.5em}{1.5ex}}
\newcommand{\fqed}{{\hfill~\qed}}
\newenvironment{proof}{{\noindent \bf Proof.}}
                      {{\hfill \fqed} \vspace{1em}}

\newcommand{\eproof}{{\hfill~\fqed} \vspace{1em}}
\newcommand{\notyet}[1]{}

\begin{document}

\title{An Infinite Class of Sparse-Yao Spanners}
\author{Matthew Bauer
    \thanks{ Department of Computing Sciences, Villanova University, Villanova, PA, USA. \protect\url{mbauer03@villanova.edu}.}
\and
Mirela Damian
    \thanks{ Department of Computing Sciences, Villanova University, Villanova, PA, USA. \protect\url{mirela.damian@villanova.edu}.}
}

\date{}

\maketitle

\begin{abstract}
We show that, for any integer $k \ge 6$, the Sparse-Yao graph $YY_{6k}$ (also known as Yao-Yao) is a spanner with stretch factor $11.67$. The stretch factor drops down to $4.75$ for $k \ge 8$.
\end{abstract}

\section{Introduction}
Let $\Pt$ be a finite set of points in the plane. The Yao graph and the Theta graph for $\Pt$ are directed geometric graphs with vertex set $\Pt$ and directed edges defined by an integer parameter $k \ge 2$ as follows. Fix a coordinate system and consider the rays obtained by a counterclockwise rotation of the positive $x$-axis about the origin by angles of $2j\pi/k$, for integer $0 \le j \le k-1$. Each pair of successive rays defines a cone whose apex is the origin, for a total of $k$ cones. Translate these cones to each node $a \in \Pt$, then connect $a$ to a ``nearest neighbor'' in each of the $k$ cones using directed edges rooted at $a$. This yields an out-degree of at most $k$. The Yao and Theta graphs differ in the way the ``nearest neighbor'' is defined.
In the case of Yao graphs, the neighbor nearest to $a$ in a cone $\cone$ is a point $b\neq a$ that lies in $C$ and minimizes the Euclidean distance $|ab|$ between $a$ and $b$; ties are broken arbitrarily.
In the case of Theta graphs, the neighbor nearest to $a$ in a cone $\cone$ is a point $b \neq a$ that lies in $C$ and minimizes the Euclidean distance between $a$ and the orthogonal projection of $b$ onto the bisector of $C$; ties are broken in favor of a neighbor $b$ that minimizes $|ab|$, and in the case of two such neighbors, one is arbitrarily selected.
Henceforth, we will refer to the Yao graph as $Y_k$ and the Theta graph as $\Theta_k$.
%
%%%%%%%%%%%%%%%%%%%%%%%%%%%%%%%%%Figure Begin
\begin{figure}[htpb]
\centering
\includegraphics[width=0.8\linewidth]{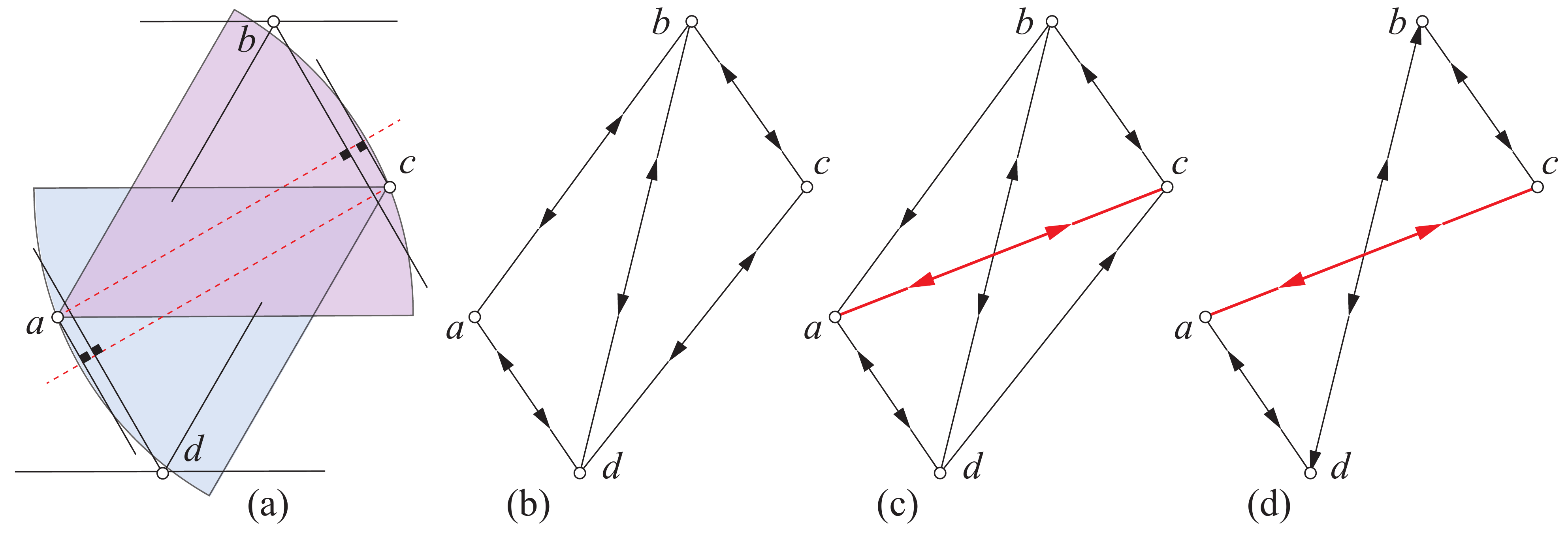}
\caption{(a) $\Pt = \{a,b,c,d\}$ (b) Theta graph $\Theta_6$ (c) Yao graph $Y_6$ (d) Sparse-Yao graph $YY_6$}
\label{fig:yaothetaex}
\end{figure}
%%%%%%%%%%%%%%%%%%%%%%%%%%%%%%%%%Figure End
%
Fig.~\ref{fig:yaothetaex} shows a simple example with four points $\Pt = \{a,b,c,d\}$.
The nonempty cones at each point, to be used in constructing $\Theta_6$ and $Y_6$ (for fixed $k = 6$),
are delineated in Fig.~\ref{fig:yaothetaex}a. In the cone $\cone$ with apex $a$ containing both points
$b$ and $c$, note that $|ac| < |ab|$ (because $b$ lies strictly outside the circle centered at
$a$ of radius $|ac|$), but the orthogonal projection of $b$ on the bisector of $\cone$ is closer to $a$
compared to the orthogonal projection of $c$ on the bisector of $C$. Consequently, $\Theta_6$ selects
$\arr{ab}$ in $\cone$ (see Fig.~\ref{fig:yaothetaex}b), whereas $Y_6$ selects $\arr{ac}$ in $\cone$ (see Fig.~\ref{fig:yaothetaex}c).
Similarly, $\Theta_6$ favors $\arr{cd}$ over $\arr{ca}$ in the cone with apex $c$ containing both $a$ and $d$,
whereas $Y_6$ selects $\arr{ca}$.

Interest in Theta graphs and Yao graphs has increased with the advancement of wireless network technologies and the need for efficient communication. Among other properties, communication graphs are required to include short paths between any pair of nodes to enable efficient routing, and to have low degree to reduce MAC-level contention and interference~\cite{DegreeMac06}. It turns out that both $\Theta_k$ and $Y_k$ obey the first requirement (for any $k > 6$ and other specific values of $k$, as detailed in Table~\ref{tab:yao}), but fail to satisfy the second one. Imagine for example the simple scenario in which $\Pt$ consists of $n-1$ nodes placed on the circumference of a circle with center node $a$. Then, for $k \ge 6$, each of $\Theta_k$ and $Y_k$ will have an edge directed from each of the $n-1$ nodes towards $a$, because $a$ is ``nearest'' in one of their cones. So each of $\Theta_k$ and $Y_k$ has out-degree $k$, but in-degree $n-1$. To overcome the problem of potential high in-degree at a node, the Sparse-Yao graph $YY_k$, also known as the Yao-Yao graph, has been introduced. The graph $YY_k \subseteq Y_k$ is obtained by applying a second Yao step to the set of incoming Yao edges: for each node $a$ and each cone rooted at $a$ containing two or more incoming edges, retain a shortest incoming edge and discard the rest; ties are broken arbitrarily. Fig.~\ref{fig:yaothetaex}d shows that Sparse-Yao graph $YY_6$ corresponding to the Yao graph $Y_6$ from Fig.~\ref{fig:yaothetaex}c: because $\arr{ba}$ and $\arr{ca}$ lie in one same cone with apex $a$, and because $|ca| < |ba|$, $YY_6$ keeps $\arr{ca}$ and discards $\arr{ba}$; similarly, because $|ac| < |dc|$, $YY_6$ keeps $\arr{ac}$ and discards $\arr{dc}$.
The degree of $YY_k$ is bounded above by $2k$ (at most $k$ outgoing edges and at most $k$ incoming edges at each node). Although not as popular, the Sparse-Theta graph can be defined analogously.

Ignore for the moment the direction of the edges in $\Theta_k$, $Y_k$ and $YY_k$, and view these graphs as \emph{undirected} graphs.
We present some interesting properties of these graphs, along with our main result, after a few brief definitions. Let $G$ be an undirected graph with vertex set $\Pt$. The \emph{length} of a path in $G$ is the sum of the Euclidean lengths of its constituent edges. For a fixed real $t \ge 1$, we say that $G$ is a $t$-\emph{spanner} for $\Pt$ if, for each pair of points $a, b \in \Pt$, there is a path in $G$ whose length is at most $t|ab|$; the value $t$ is called the \emph{stretch factor} of $G$. The graphs $\Theta_k$ and $Y_k$ are known to be spanners for any $k \ge 6$; the stretch factors for specific ranges of $k$ are listed in Table~\ref{tab:yao}. Very little in comparison is known about Sparse-Yao graphs. The only existing results are negative and show that $YY_k$, for $k \in \{2,3,4,6\}$, are not $t$-spanners for any constant real value $t$. For a comprehensive discussion of spanners, we refer the reader to the books by Peleg~\cite{Peleg00} and Narasimhan and Smid~\cite{ns-gsn-07}.

\begin{table}[htpb]
\begin{center}
\scriptsize{
\begin{tabular}{|c|c|c|c|} \hline
  & Graph $\Theta_k$ & \raisebox{-2pt}{Graph $Y_k$} & Graph $YY_k$ \\[2pt]
\cline{2-4}
Parameter $k$ & \multicolumn{3}{c|}{\raisebox{-2pt}{$t$-Spanner, for constant real value $t \ge 1$?}} \\[2pt]
\hline\hline
\raisebox{-2pt}{$k \in \{2, 3\}$} & \multicolumn{3}{c|}{\raisebox{-2pt}{NO~\cite{MollaThesis09}}} \\[2pt]
\hline
\raisebox{-2pt}{$k = 4$} & \raisebox{-2pt}{\open} &
\raisebox{-2pt}{$t = 8\sqrt{2}(26+23\sqrt{2})$~\cite{BDD+10}} &
\raisebox{-2pt}{NO~\cite{DMP09}} \\[2pt]
\hline
\raisebox{-2pt}{$k = 5$} & \multicolumn{3}{c|}{\raisebox{-2pt}{\open}} \\[2pt]
\hline
\raisebox{-2pt}{$k = 6$} & \raisebox{-2pt}{$t = 2$~\cite{Bon+10}} & \raisebox{-2pt}{$t=17.64$~\cite{DR10}} &
\raisebox{-2pt}{NO~\cite{MollaThesis09}} \\[2pt]
\hline
\raisebox{-6pt}{$k > 6$} & \multirow{2}{*}{\raisebox{-1em}{$t = \frac{1}{1-2\sin(\pi/k)}$}} &
\raisebox{-6pt}{$t = \frac{1+\sqrt{2-2\cos(2\pi/k)}}{2\cos(2\pi/k) - 1}$~\cite{BDD+10}} &
\multirow{2}{*}{\raisebox{-1em}{\normalsize{\red{$k = 6k', k' \ge 6, t = 11.67$~~[this paper]}}}} \\[6pt]
\cline{1-1}\cline{3-3}
\raisebox{-3pt}{$k > 8$} &  &
\raisebox{-3pt}{$t = \frac{1}{\cos(2\pi/k) - \sin(2\pi/k)}$~\cite{bmnsz-agbsp-03}} & \\[6pt]
\hline\hline
\end{tabular}
} %scriptsize
\vspace{-1em}
\end{center}
\caption{Spanning properties of Theta and Yao-based graphs.}
\label{tab:yao}
\end{table}

In this paper we take a first step towards proving that $YY_k$ is a spanner, for sufficiently large $k$.
Our main result is that $YY_{6k}$ is a $t$-spanner, for any $k \ge 6$ and $t = 11.67$. As far as we know, this is the first positive result regarding the spanning property of Sparse-Yao graphs. This result relies on a recent result by Bonichon et al.~\cite{Bon+10}, who prove that $\Theta_6$ is a $2$-spanner. Our main contribution is showing that  $YY_{6k}$ contains a short path between the endpoints of each edge in $\Theta_6$. More precisely, we show that
corresponding to each edge $ab \in \Theta_6$, there is a path between $a$ and $b$ in $YY_{6k}$ no longer than $t|ab|$, for $t = 5.832$ and $k \ge 6$. Combined with the fact that $\Theta_6$ is a $2$-spanner, this yields an upper bound of $11.67$ on the stretch factor of $YY_{6k}$. This result also shows that the class of Sparse-Yao spanners is infinite.

\subsection{Notation and Definitions}
Throughout the rest of the paper we work with the graphs $\Theta_6$, $Y_{6k}$ and $YY_{6k}$ defined for a fixed point set $\Pt$ and for positive integer $k \ge 2$. We view paths in these graphs as \emph{undirected}, and refer to the direction of an edge only when necessary to establish certain graph properties.
All three graphs $\Theta_6$, $Y_{6k}$ and $YY_{6k}$ use a first ray emanating from the origin of the coordinate system in the direction of the positive $x$-axis; each successive ray is obtained by a counter-clockwise rotation of the previous ray by angle $\alpha$ about the origin ($\alpha = 2\pi/6$ in the case of $\Theta_6$, and $\alpha = 2\pi/(6k)$ in the case of $Y_{6k}$ and $YY_{6k}$). A \emph{cone} is a region between two successive rays. Starting from the positive $x$-axis, the cones encountered in counter-clockwise order are $\cone_{\Theta1}, \cone_{\Theta2}, \ldots, \cone_{\Theta6}$ in the case of $\Theta_6$ (see Fig.~\ref{fig:thetacross}a), and $\cone_{Y1}(a), \cone_{Y2}(a), \ldots, \cone_{Y6k}(a)$ in the case of $Y_{6k}$. Note that the subscripts $_\Theta$ and $_Y$ are used to differentiate between the cones used in constructing $\Theta_6$ and those used in constructing $Y_{6k}$ and $YY_{6k}$; and the numerical subscripts are used to identify a particular cone from among all cones with the same apex.
Each cone $\cone$ is \emph{half-open} and \emph{half-closed} in the sense that it includes the ray clockwise from $\cone$ bounding $\cone$, but excludes the ray counter-clockwise from $\cone$ bounding $\cone$.

For any point $a \in \Pt$ and fixed cone $\cone$, let $\cone(a)$ denote the copy of $\cone$ translated so that its apex coincides with $a$.
%When the point $a$ is undefined, we simply refer to a translation of the cone $\cone$ at an unspecified location as a $\cone$-\emph{like} cone.
For any two points $a, b \in \Pt$, we use $\cone_\Theta(a, b)$ to refer to the cone with apex $a$ that contains $b$, used in constructing $\Theta_6$.
We define $T(a, b)$ to be the open equilateral triangle with two of its sides along the
bounding rays for $\cone_\Theta(a, b)$, and the third side passing through $b$ (see, for example, the large shaded
triangle from Fig.~\ref{fig:proof1}a).

We say that two edges \emph{intersect} each other if they share a common point. If the common point is not an endpoint, the edges \emph{cross} each other. Fig.~\ref{fig:thetacross}b shows a pair of crossing edges; compare it to the two pairs of intersecting but non-crossing edges from Fig.~\ref{fig:thetacross}c. Throughout the paper, $\oplus$ is used to denote the
concatenation operator. A path in a graph between two points $a$ and $b$ is denoted by $p(a, b)$. To avoid confusion, we attach to the path notation one of the subscripts $_\Theta$, $_Y$ and $_{YY}$, depending on whether the path is in $\Theta_6$, $Y_{6k}$ or $YY_{6k}$. For example, $p_Y(a, b)$ refers to a path in $Y_6$ from $a$ to $b$.

The rest of the paper is organized as follows. Sec.~\ref{sec:basic} introduces a few isolated lemmas that are used in our main proof. The proofs of these lemmas are rather involved, and for this reason we defer them until Sec.~\ref{sec:proofs}, by which point their use is the main proof should be clearly understood. Sec.~\ref{sec:main} presents our main result. We wrap up with some conclusions and future work in Sec.~\ref{sec:conclusions}.

\section{Preliminaries}
\label{sec:basic}
In this section we provide a few isolated lemmas that will be used in the main proof. We defer the proofs of these lemmas until after the proof of the main theorem (Thm.~\ref{thm:maintheta}), so that the flow of ideas can be followed without interruption. We encourage the reader to skip ahead to \S\ref{sec:main}, and refer back to these lemmas from the context of Thm.~\ref{thm:maintheta}, where their usefulness will become evident.

\medskip
\noindent
We begin this section with the statement of a result established in~\cite{Bon+10}.

\begin{theorem}
For any pair of points $a, b\in \Pt$, there is a path in $\Theta_6$
whose total length is bounded above by $2|ab|$.
~\emph{\cite{Bon+10}}
\label{thm:theta6}
\end{theorem}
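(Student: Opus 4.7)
The plan is to proceed by induction on the number of points of $\Pt$ strictly inside the open equilateral triangle $T(a,b)$. For the base case, when $T(a,b) \cap \Pt = \emptyset$, I would argue that $ab$ is itself an edge of $\Theta_6$: every point $c \in \cone_\Theta(a,b) \setminus (T(a,b) \cup \{b\})$ has strictly larger projection onto the bisector of $\cone_\Theta(a,b)$ than $b$ does, so up to tie-breaking on the third side of $T(a,b)$, $b$ minimizes the Theta criterion in $\cone_\Theta(a,b)$. The single-edge path from $a$ to $b$ has length $|ab| \le 2|ab|$.

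For the inductive step, let $c \ne b$ denote the Theta-neighbor of $a$ in $\cone_\Theta(a,b)$, so $c$ lies in the closure of $T(a,b)$. The natural move is to concatenate the directed edge $ac$ with a path from $c$ to $b$ guaranteed by induction, since $T(c,b)$ is strictly contained in $T(a,b)$ and therefore contains strictly fewer points of $\Pt$. The outstanding task is then a purely geometric inequality of the form $|ac| + 2|cb| \le 2|ab|$, which must be verified for every legal position of $c$ inside $T(a,b)$.

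The main obstacle is that this inequality is tight for $\Theta_6$ and routinely fails under a naive induction. Because the cone angle is exactly $60^\circ$, the canonical neighbor $c$ may sit arbitrarily close to a corner of $T(a,b)$ other than $a$, forcing $|cb|$ to approach $|ab|$ while $|ac|$ remains bounded away from zero; this is precisely the reason the standard formula $t = 1/(1 - 2\sin(\pi/k))$ diverges at $k=6$. To close the gap, I would exploit the well-studied connection between $\Theta_6$ and the so-called TD-Delaunay triangulation (the Delaunay triangulation under the equilateral triangular distance function), whose undirected edge set embeds naturally into $\Theta_6$. Within that framework I would route between $a$ and $b$ along a \emph{canonical path} obtained from the sequence of TD-Delaunay triangles stabbed by the segment $ab$, and charge each step to a portion of $|ab|$ via a telescoping bound on projections onto the line through $a$ and $b$.

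The technically delicate step, and the one I expect to absorb most of the work, is verifying this telescoping inequality. The hardest configurations are those in which two consecutive TD-Delaunay triangles along the canonical path share a vertex lying on the same side of $ab$, because in these cases the projection of the intermediate hop onto $ab$ need not be monotone. Handling this requires carefully splitting the canonical path at the sign changes of the projection, bounding each monotone run by the corresponding increment along $ab$, and then assembling the pieces to obtain the overall factor of $2$. This is where the tightness of the constant becomes visible, and where a purely local argument on a single inductive step has to be replaced by the more global canonical-path bookkeeping.
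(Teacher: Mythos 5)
This theorem is not actually proved in the paper: it is quoted from Bonichon et al.\ \cite{Bon+10}, whose argument shows that half-$\Theta_6$ is a triangular-distance Delaunay triangulation and then invokes Chew's result \cite{Chew89} that such triangulations are $2$-spanners via a path through the triangles stabbed by $ab$. Your plan correctly diagnoses why the naive single-step induction fails at $k=6$ and then lands on exactly this TD-Delaunay/canonical-path route, so it is essentially the same approach as the cited proof.
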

The bound $2$ on the stretch factor of $\Theta_6$ is tight~\cite{Bon+10}. The key ingredient in the result of Thm.~\ref{thm:theta6} is a specific subgraph of $\Theta_6$, called \emph{half-}$\Theta_6$. This graph preserves only half of the edges of $\Theta_6$, those belonging to non
consecutive cones. Bonichon et al.~\cite{Bon+10} show that half-$\Theta_6$ is a
\emph{triangular-distance}\footnote{The \emph{triangular distance} from a point $a$ to a point $b$
is the side length of the smallest equilateral triangle centered at $a$ and touching $b$.}
\emph{Delaunay triangulation}, computed as the dual of the Voronoi diagram based on
the triangular distance function.
This result, combined with Chew's result from~\cite{Chew89}, showing that any triangular-distance Delaunay triangulation is a $2$-spanner, yields the result of Thm.~\ref{thm:theta6}. For details, we refer the reader
to~\cite{Bon+10}.
%An immediate corollary of Thm.~\ref{thm:theta6} is the following:
%\begin{corollary}
%Each of the two half-$\Theta_6$ graphs is planar. In other words, edges lying in cones $\cone_{\Theta1}$, %$\cone_{\Theta3}$ and $\cone_{\Theta5}$ may not cross each other; and similarly, edges lying in cones %$\cone_{\Theta2}$, $\cone_{\Theta4}$ and $\cone_{\Theta6}$ may not cross each other. \emph{[This idea is depicted in %Fig.~\ref{fig:thetacross}.]}
%\label{cor:thetacross}
%\end{corollary}

%%%%%%%%%%%%%%%%%%%%%%%%%%%%%%%%%Figure Begin
\begin{figure}[htpb]
\centering
\includegraphics[width=0.5\linewidth]{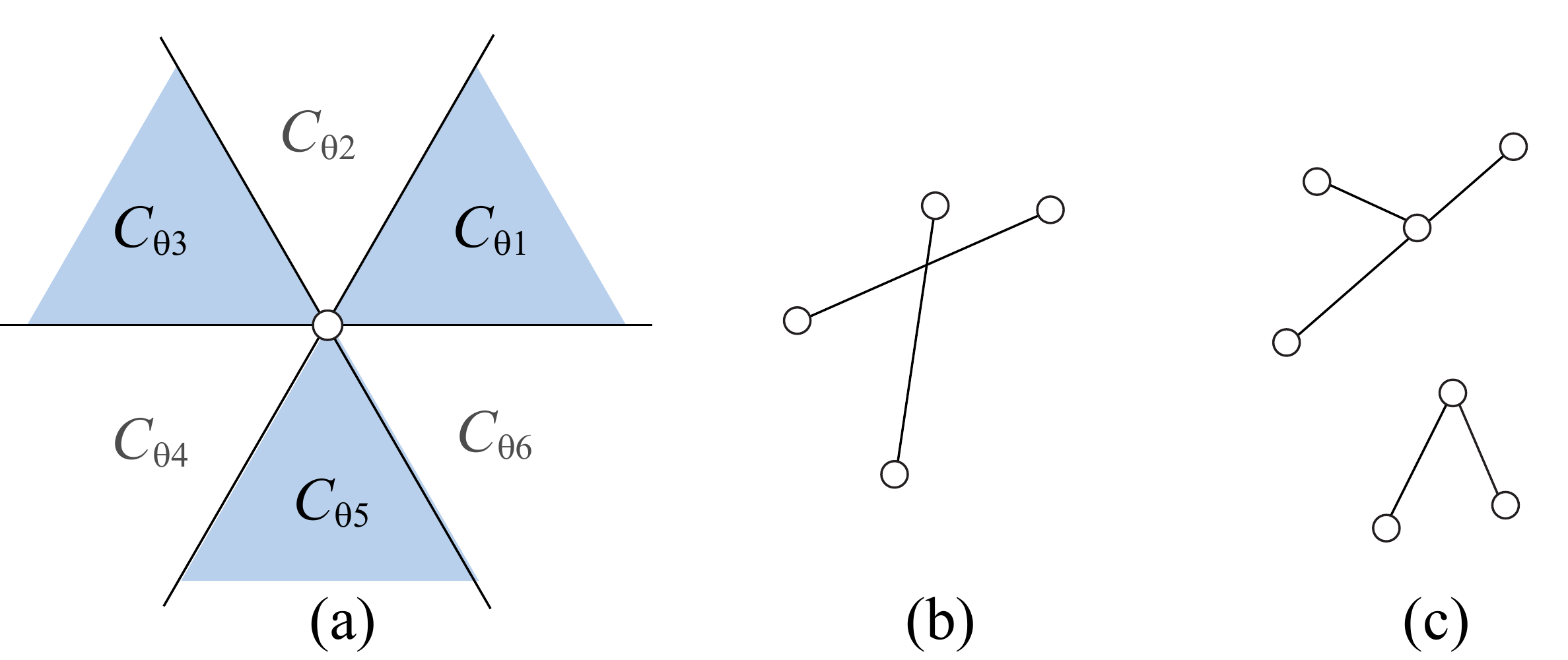}
\caption{(a) Cones numbering (b) Crossing edges (c) Intersecting, non-crossing edge pairs.}
\label{fig:thetacross}
\end{figure}
%%%%%%%%%%%%%%%%%%%%%%%%%%%%%%%%%Figure End

\noindent
Lem.~\ref{lem:thetapath} below will play a central role in the proofs of Lemmas~\ref{lem:thetapathbb} through~\ref{lem:thetapathaa3}.
\begin{lemma}
Let $a, b \in \Pt$ and let $x$ and $z$ be the other two vertices of $T(a, b)$. Let $y$ be the point on $az$ such that $by$ is parallel to $ax$. Let $p_\Theta(a, b)$ be a shortest path in $\Theta_6$ from $a$ to $b$. If $\triangle byz$ is empty of points in $\Pt$, then $|p_\Theta(a, b)| \le |ax| + |ay|$. Moreover, each edge of $p_\Theta(a, b)$ is no longer than $|ax|$. \emph{[Refer to Fig.~\ref{fig:S-trapezoid}a.]}
\label{lem:thetapath}
\end{lemma}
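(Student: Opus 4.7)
I will induct on $|\Pt \cap T(a,b)|$, constructing a greedy $\Theta_6$ path $a \to a_1 \to a_2 \to \cdots \to b$ that at each step follows the $\Theta_6$-edge out of the current vertex in the cone containing $b$. Bounding the length of this explicit path bounds $|p_\Theta(a,b)|$ (which is no longer), and the per-edge bound is witnessed by the same construction.

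Base case. If $\Pt \cap T(a,b) = \{a,b\}$, then $b$ is the $\Theta_6$-neighbor of $a$ in $\cone_\Theta(a,b)$ and $ab \in \Theta_6$. From $by \parallel ax$ one gets $\angle ayb = 120^\circ$; the equilateral structure gives $|ax| = |ay| + |yb|$ (using $|yz| = |bz|$ from $\triangle byz$ equilateral). The law of cosines in $\triangle ayb$ yields $|ab|^2 = |ay|^2 + |yb|^2 + |ay|\cdot|yb| \le (|ay|+|yb|)^2 = |ax|^2$, so $|ab| \le |ax| \le |ax| + |ay|$, establishing both claims.

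Inductive step. Let $a_1$ be the $\Theta_6$-neighbor of $a$ in $\cone_\Theta(a,b)$, so $a_1 \in T(a,b)$; since $\triangle byz$ is empty of points in $\Pt$, $a_1$ lies in the trapezoid $axby$. Place $a$ at the origin with the bisector of $\cone_\Theta(a,b)$ along the positive $y$-axis, writing $a_1 = (p,q)$ with $q \ge \sqrt{3}|p|$. In the principal case, $b$ also lies in the cone at $a_1$ parallel to $\cone_\Theta(a,b)$; then $T(a_1,b) \subseteq T(a,b)$ and the corresponding small triangle $\triangle b y_1 z_1 \subseteq \triangle byz$ (the inclusion $z_1 \in bz$ follows from the cone-containment inequality $|b_x - p| \le t/2 - q/\sqrt{3}$), so $\triangle b y_1 z_1$ inherits emptiness. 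Since $a \notin T(a_1,b)$ (because $a$ sits below $a_1$), the inductive hypothesis yields a path from $a_1$ to $b$ of length $\le |a_1 x_1| + |a_1 y_1|$ with each edge $\le |a_1 x_1|$; prepending $aa_1$ produces the candidate path. When $b$ instead lies in a neighboring cone at $a_1$, the argument is replayed in a frame rotated by $60^\circ$, with the analogous small sub-triangle empty by inheritance from $\triangle byz$.

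Numerical closure. For the length, $|aa_1| + |a_1 x_1| + |a_1 y_1| \le |ax| + |ay|$ reduces, using $|a_1 x_1| = t - 2q/\sqrt{3}$, $|a_1 y_1| = t/2 - q/\sqrt{3} - b_x + p$, $|ax| = t$, and $|ay| = t/2 - b_x$, to the single inequality $\sqrt{p^2 + q^2} \le q\sqrt{3} - p$, equivalent after squaring to $q(q - p\sqrt{3}) \ge 0$, which holds because $a_1$ lies in the cone. For the per-edge bound, $|aa_1| \le |ax|$ because every vertex of trapezoid $axby$ sits within distance $|ax|$ of $a$ (check $|ab|^2 = b_x^2 + 3t^2/4 \le t^2$ and $|ay| \le t$), and subsequent edges are bounded by $|a_1 x_1| \le |ax|$ by induction. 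The main obstacle is the cone-change case: verifying that both the emptiness transfer and the length accounting survive when $b$ migrates to a rotated cone at $a_1$ is the most delicate part of the argument.
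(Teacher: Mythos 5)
Your greedy construction and your ``principal case'' follow the paper's argument closely, but the cone-change case---which you yourself flag as the crux and leave unverified---contains a genuine gap, and the repair you sketch does not work. Let $o$ be the point on $ax$ with $bo$ parallel to $ay$, so the trapezoid $axby$ splits into the parallelogram $aoby$ and the equilateral triangle $\triangle box$; your cone-change case is exactly $a_1 \in \triangle box$, where $b$ lies in the cone at $a_1$ rotated by $\pi/3$. If you apply the inductive hypothesis to the pair $(a_1,b)$ in that rotated frame, the sub-triangle of $T(a_1,b)$ that the lemma requires to be empty is, for one choice of which corner plays the role of $z$, the portion of $T(a_1,b)$ on the far side of the line $xz$ (entirely outside $T(a,b)$), and for the other choice, a triangle lying \emph{below} the line $yb$, inside the parallelogram $aoby$. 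Neither region is contained in $\triangle byz$ and neither is empty by hypothesis, so ``emptiness by inheritance from $\triangle byz$'' fails here. The paper resolves this case differently: it reverses source and target and applies the inductive hypothesis to the pair $(b,a_1)$ with apex at $b$, so that $T(b,a_1) \subseteq \triangle box$ and the required empty sub-triangle lies inside $T(a,a_1)$, which \emph{is} empty because $\arr{aa_1} \in \Theta_6$. The length accounting then closes with $|ax|+|bx| = |ax|+|ay|$, using $|bx|=|ob|=|ay|$.

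A secondary problem is your induction measure. In the cone-change case $T(a_1,b) \not\subseteq T(a,b)$, so $|\Pt \cap T(a_1,b)|$ need not be smaller than $|\Pt \cap T(a,b)|$ and the induction does not obviously terminate; the paper instead inducts on the Euclidean length of the pair, using $|ba_1| < |ab|$ (and $|a_1 b|<|ab|$ in the principal case). Finally, note that $T(a,b)$ is defined as an \emph{open} triangle, so even when it contains no points of $\Pt$ the edge $\arr{ab}$ may lose a projection tie to a closer point on the segment $xz$; your base case should be phrased to accommodate this (the paper's base case, a closest pair of $\Pt$, sidesteps it).
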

Note that Lem.~\ref{lem:thetapath} does not specify which of the two sides $ax$ and $az$ lies clockwise from $T(a,b)$, so the lemma applies in both situations.

%%%%%%%%%%%%%%%%%%%%%%%%%%%%%%%%%Figure Begin
\begin{figure}[htpb]
\centering
\includegraphics[width=0.9\linewidth]{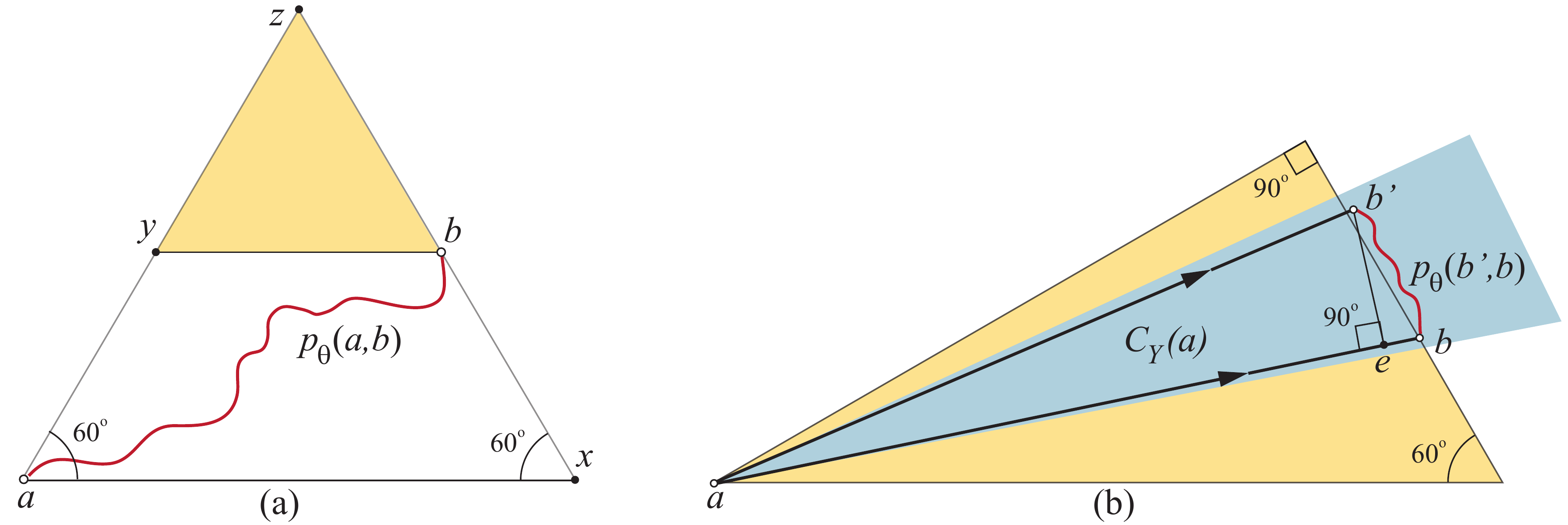}
\caption{(a) Lem.~\ref{lem:thetapath}: $|p_\Theta(a, b)| \le |ax| + |ay|$
(b) Lem.~\ref{lem:thetapathbb}: $ab \in \Theta_6$, $ab' \in Y_{6k}$, $|p_\Theta(b',b)| < \sqrt{3}|b'e|$. }
\label{fig:S-trapezoid}
\end{figure}
%%%%%%%%%%%%%%%%%%%%%%%%%%%%%%%%%Figure End

\medskip
\noindent
Lemmas~\ref{lem:thetapathbb} through~\ref{lem:thetapathaa3} isolate specific situations that will arise in the analysis of our main result. These lemmas can be stated and investigated independently.

\begin{lemma}
Fix an integer $k > 1$ and let $a$, $b$ and $b'$ be distinct points in $\Pt$ such that $\arr{ab} \in \Theta_6$ lies in $\cone_{\Theta1}(a)$ below the bisector of $\cone_{\Theta1}(a)$,
and $\arr{ab'} \in Y_{6k}$ lies in $\cone_Y(a, b)$. Let $e$ be the point on $ab$ such that $b'e$ is perpendicular on $ab$. Then there is a path $p_\Theta(b', b)$ in $\Theta_6$ of length
\[
|p_\Theta(b',b)| < \sqrt{3}|b'e|
\]
Furthermore, each edge of $p_\Theta(b',b)$ is strictly shorter than $ab$. \emph{[Refer to Fig.~\ref{fig:S-trapezoid}b.]}
\label{lem:thetapathbb}
\end{lemma}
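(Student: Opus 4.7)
My plan is to apply Lem.~\ref{lem:thetapath} to the pair $(b',b)$, after identifying the $\Theta_6$-cone at $b'$ containing $b$ and verifying that the associated auxiliary triangle $\triangle byz$ is empty of $\Pt$-points. I would rotate coordinates so that $a=(0,0)$ and $\arr{ab}$ lies on the positive $x$-axis. Let $\theta\in[0^\circ,30^\circ)$ be the original angle of $\arr{ab}$, let $\delta$ be the signed angle from $\arr{ab}$ to $\arr{ab'}$ (so $|\delta|<\pi/(3k)$), and put $d=|ab|$, $r=|ab'|$, so that $b'=(r\cos\delta,r\sin\delta)$, $|b'e|=r\sin|\delta|$, and $|eb|=d-r\cos\delta$. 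Combining the $\Theta_6$-nearest inequality $r\cos(\delta+\theta-30^\circ)\ge d\cos(\theta-30^\circ)$ with $r\le d$ forces $\delta>0$ and $r\ge r_{\min}:=d\cos(30^\circ-\theta)/\cos(30^\circ-\theta-\delta)$. A brief calculation on the angle $\beta$ satisfying $\tan\beta=|b'e|/|eb|$ then shows $\beta\ge 60^\circ+\theta$, with strict inequality whenever $r>r_{\min}$, so $b$ lies inside $\cone_{\Theta5}(b')$ in the generic case.

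I would then apply Lem.~\ref{lem:thetapath} to $(b',b)$ with $\cone_\Theta(b',b)=\cone_{\Theta5}(b')$, labeling as $x$ the corner of $T(b',b)$ at direction $-60^\circ-\theta$ from $b'$ and as $z$ the corner at direction $-120^\circ-\theta$. Letting $s=2h/\sqrt{3}$ with $h=r\sin(\theta+\delta)-d\sin\theta$ denote the side length of $T(b',b)$, a direct computation yields
\[
|b'x|+|b'y|=2\bigl[|b'e|\cos(30^\circ+\theta)-|eb|\sin(30^\circ+\theta)\bigr],
\]
which is strictly less than $\sqrt{3}\,|b'e|$ for every $\theta\in[0^\circ,30^\circ)$, since $2\cos(30^\circ+\theta)\le\sqrt{3}$ (with equality only at $\theta=0$, where the subtracted term $-2|eb|\sin(30^\circ+\theta)<0$ still secures strictness because $|eb|>0$).

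The main obstacle is verifying the emptiness hypothesis of Lem.~\ref{lem:thetapath}, namely $\Pt\cap\triangle byz=\emptyset$. My strategy is to show $\triangle byz\subseteq\overline{T(a,b)}$ and invoke the emptiness of $T(a,b)$ guaranteed by the fact that $b$ is the $\Theta_6$-neighbor of $a$ in $\cone_{\Theta1}(a)$. Concretely: $b$ lies on the third side of $T(a,b)$ by definition; $y$ lies on the same third side, which follows from the identity $y_x\cos(30^\circ-\theta)+y_y\sin(30^\circ-\theta)=d\cos(30^\circ-\theta)$ obtained by substituting the explicit formula for $|b'y|$; and $z$ lies on the line through $b$ parallel to the lower side of $T(a,b)$ (from the identity $z_y\cos\theta+z_x\sin\theta=d\sin\theta$) and strictly inside $T(a,b)$, which reduces after some algebra to the inequality $\cos(30^\circ+\theta+\delta)>0$, i.e., $\delta<60^\circ-\theta$; this is satisfied since $\delta<\pi/(3k)\le 30^\circ<60^\circ-\theta$ for $k\ge 2$ and $\theta<30^\circ$. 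The ``strictly shorter than $|ab|$'' claim for each edge of $p_\Theta(b',b)$ then follows from Lem.~\ref{lem:thetapath}'s auxiliary edge bound together with $|b'x|=s\le 2d\sin\delta/\sqrt{3}<d=|ab|$ (valid for $\delta<60^\circ$). The degenerate case $r=r_{\min}$, in which $b$ lies on the boundary shared between $\cone_{\Theta5}(b')$ and $\cone_{\Theta6}(b')$, can be handled either by a parallel argument in $\cone_{\Theta6}(b')$ or by a continuity argument from the generic case.
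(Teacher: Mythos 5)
Your proposal is correct and follows essentially the same route as the paper: place $b$ in $\cone_{\Theta5}(b')$, show the residual equilateral triangle of $T(b',b)$ sits inside $T(a,b)$ so that Lem.~\ref{lem:thetapath} applies, and bound the resulting $|b'x|+|b'y|$ by $\sqrt{3}|b'e|$ (your closed form $2[|b'e|\cos(30^\circ+\theta)-|eb|\sin(30^\circ+\theta)]$ agrees exactly with the paper's $|b'x|+|xb|$). The only difference is stylistic --- explicit coordinates and trigonometry where the paper argues synthetically --- and you additionally flag the boundary case $r=r_{\min}$, which the paper dispatches via the $\Theta_6$ tie-breaking rule.
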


%\vspace{-1em}
\begin{lemma}
Fix an integer $k > 1$ and angle $\alpha = \frac{\pi}{3k}$. Let $a$, $a'$ and $b'$ be distinct points in $\Pt$ such that (i) $\arr{ab'} \in Y_{6k}$ lies in $\cone_{\Theta1}(a)$,
and (ii) $\arr{a'b'} \in YY_{6k}$ lies in $\cone_Y(b', a)$, such that $a' \in \cone_{\Theta3}(a) \cup \cone_{\Theta5}(a)$. Then there is a path $p_\Theta(a, a')$ in $\Theta_6$ of length
\[
|p_\Theta(a,a')| <  \frac{4}{\sqrt{3}}|a'b'|\sin(\alpha)
\mbox{~~~~~~~\emph{[Refer to Fig.~\ref{fig:thetapathaa1}.]}}
\]
\label{lem:thetapathaa1}
\end{lemma}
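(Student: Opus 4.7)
The plan is to bound $|aa'|$ directly using the triangle $\triangle aa'b'$ and then invoke the $2$-spanner property of $\Theta_6$ from Thm.~\ref{thm:theta6}. Three facts about this triangle drive the estimate. First, the cone memberships $b' \in \cone_{\Theta1}(a)$ and $a' \in \cone_{\Theta3}(a) \cup \cone_{\Theta5}(a)$ leave at least a full $\Theta_6$-cone of angular separation between $b'$ and $a'$ as seen from $a$, so $\beta := \angle a'ab' > 60^\circ$ strictly (using the half-openness of the cones). Second, since $\arr{ab'} \in Y_{6k}$ lies in the Yao cone $\cone_Y(b',a)$ that also contains $a'$, the angle $\angle ab'a'$ is at most $\alpha = \pi/(3k)$. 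Third, $\arr{ab'}$ and $\arr{a'b'}$ are both incoming $Y_{6k}$ edges at $b'$ in the same Yao cone, and the sparse-Yao step retains $\arr{a'b'}$, so $|a'b'| \le |ab'|$.

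From $|a'b'| \le |ab'|$ and the Law of Cosines one obtains $|aa'| \le 2|ab'|\cos\beta$, which forces $\cos\beta > 0$ (and hence $\beta < 90^\circ$) whenever $a \ne a'$. Combined with the first observation, $\beta$ lies strictly in $(60^\circ, 90^\circ)$, so $\sin\beta > \sqrt{3}/2$. The Law of Sines applied to $\triangle aa'b'$ then yields
\[
|aa'| \;=\; \frac{|a'b'|\,\sin(\angle ab'a')}{\sin\beta} \;<\; \frac{2}{\sqrt{3}}\,|a'b'|\,\sin\alpha.
\]
Thm.~\ref{thm:theta6} now furnishes a $\Theta_6$ path from $a$ to $a'$ of length at most $2|aa'|$, which is strictly less than $\frac{4}{\sqrt{3}}|a'b'|\sin\alpha$, as desired.

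The only real subtlety is tracking the strictness of the two angular bounds on $\beta$; once $\beta \in (60^\circ, 90^\circ)$ is established strictly (and the argument treats $a' \in \cone_{\Theta3}(a)$ and $a' \in \cone_{\Theta5}(a)$ in the same way, using only the angular gap from $\cone_{\Theta1}(a)$), the remainder is a routine application of the Laws of Sines and Cosines together with the known stretch factor of $\Theta_6$.
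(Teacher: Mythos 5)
Your proposal is correct and follows essentially the same route as the paper: bound $|aa'|$ via the Law of Sines in $\triangle aa'b'$ using $\ang{ab'a'} < \alpha$ and $\ang{a'ab'} > \pi/3$, then invoke the $2$-spanner bound of Thm.~\ref{thm:theta6}. Your extra step deriving $\ang{a'ab'} < \pi/2$ from $|a'b'| \le |ab'|$ is a welcome addition, since the paper's lemma proof uses $\sin(\ang{a'ab'}) > \sqrt{3}/2$ without explicitly ruling out an obtuse angle there (that observation appears only later, in the proof of Thm.~\ref{thm:maintheta}).
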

\vspace{-1.5em}
Intuitively, the distance $|aa'|$ is the context of Lem.~\ref{lem:thetapathaa1} is fairly small. For this reason,
Lem.~\ref{lem:thetapathaa1} claims an upper bound on the length of $p_\Theta(a,a')$ that is good enough for our purposes (in the sense that it is superseded by the upper bounds derived in the companion lemmas), but is not necessarily tight.

%%%%%%%%%%%%%%%%%%%%%%%%%%%%%%%%%Figure Begin
\begin{figure}[htpb]
\centering
\includegraphics[width=0.5\linewidth]{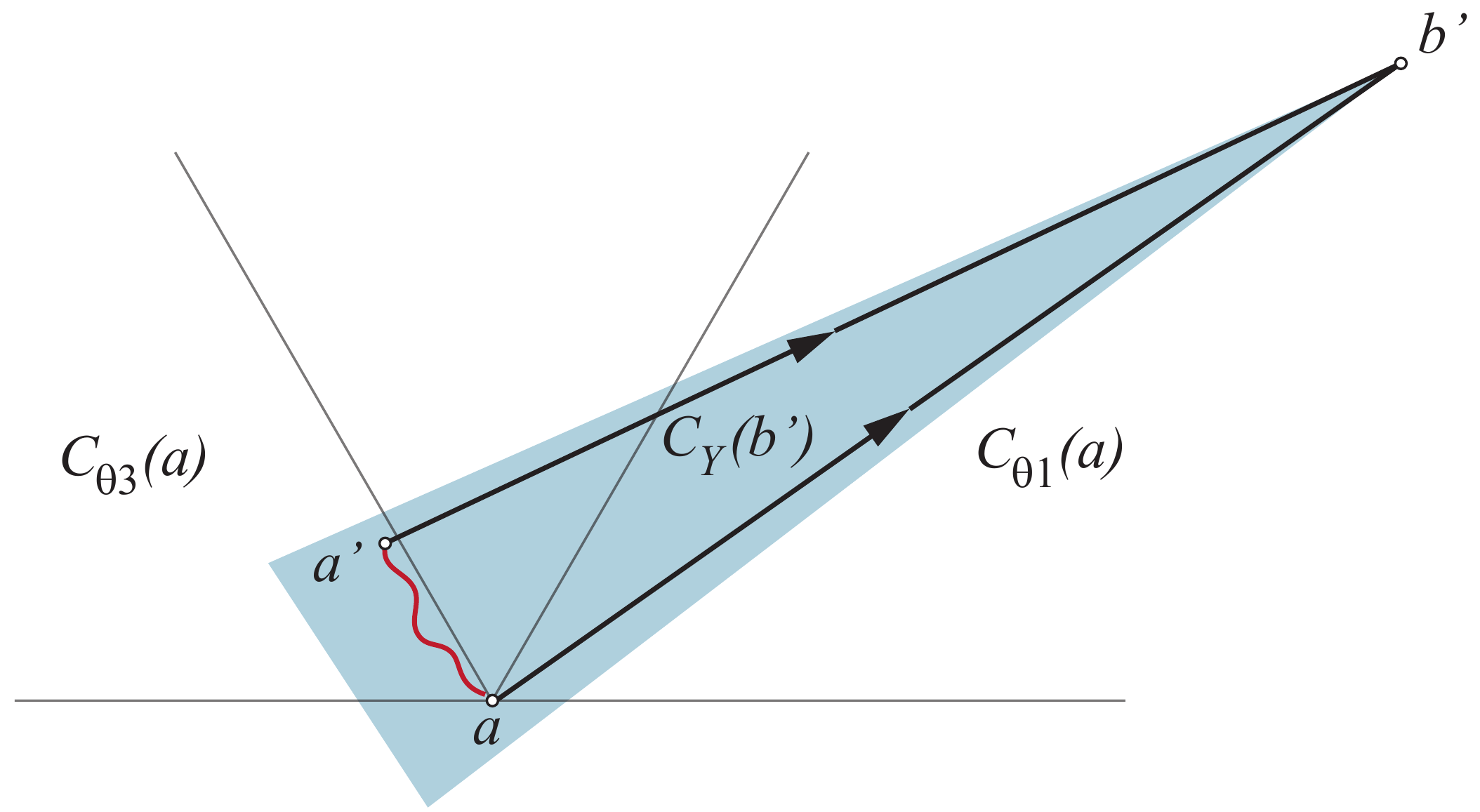}
\caption{Lem.~\ref{lem:thetapathaa1}: $ab' \in Y_{6k}$, $a'b' \in YY_{6k}$, $a' \in \cone_{\Theta3}(a)$, $|p_\Theta(a,a')| <  4|a'b'|\sin(\alpha)/\sqrt{3}$.}
\label{fig:thetapathaa1}
\end{figure}
%%%%%%%%%%%%%%%%%%%%%%%%%%%%%%%%%Figure End

\vspace{-1em}
\begin{lemma}
Fix an integer $k > 1$ and angle $\alpha = \frac{\pi}{3k}$. Let $a$, $b$, $a'$ and $b'$ be distinct points in $\Pt$ that satisfy the following properties: (i) $\arr{ab} \in \Theta_6$ lies in $\cone_{\Theta1}(a)$ below the bisector of $\cone_{\Theta1}(a)$, (ii) $\arr{ab'} \in Y_{6k}$ lies in $\cone_Y(a, b)$, and (iii) $\arr{a'b'} \in YY_{6k}$ lies in $\cone_Y(b', a)$ such that $a' \in \cone_{\Theta2}(a)$. Then there is a path $p_\Theta(a', a)$ in $\Theta_6$ of length
\[
|p_\Theta(a',a)| < |a'b'|\sin(\alpha)\left( 1 +
\max\left\{\sqrt{2}, \frac{2\sin(\frac{\pi}{6}+\alpha)}{\sqrt{3}\tan(\frac{\pi}{6}-\alpha)}\right\}\right).\]
\noindent
Furthermore, each edge of $p_\Theta(a',a)$ is strictly shorter than $ab$.
\emph{[Refer to Fig.~\ref{fig:S-thetapathaa2}a])}
\label{lem:thetapathaa2}
\end{lemma}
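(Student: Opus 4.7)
The plan is to apply Lem.~\ref{lem:thetapath} to bound a shortest $\Theta_6$ path from $a'$ to $a$. I would begin by chaining the Yao and Sparse-Yao filter conditions to obtain $|a'b'| \le |ab'| \le |ab|$. The inequality $|ab'| \le |ab|$ is immediate, since $b'$ is Yao-nearest to $a$ in $\cone_Y(a,b)$ and $b$ itself lies in this cone. For the other inequality, note that $\arr{ab'}$ is an incoming Yao edge to $b'$ lying in $\cone_Y(b',a)$ (the Yao cone at $b'$ containing $a$, and by hypothesis also $a'$); the Sparse-Yao filter at $b'$ retained $\arr{a'b'}$ over $\arr{ab'}$ in this cone, so $|a'b'| \le |ab'|$.

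The geometric heart of the argument is that $a$ and $a'$ both sit inside the Yao cone $\cone_Y(b',a)$ of angular width $\alpha$, so $\angle ab'a' < \alpha$. Together with $|a'b'|\le|ab'|$, this bounds the perpendicular distance from $a'$ to the line through $a$ and $b'$ by $|a'b'|\sin(\alpha)$, which is the source of the $\sin(\alpha)$ factor in the claimed bound. The extra hypothesis $a'\in \cone_{\Theta2}(a)$ then confines $a'$ to a narrow wedge above $a$: it forbids $a'$ from drifting along the line $ab'$ (which would push it outside $\cone_{\Theta2}(a)$), so $|aa'|$ itself is of the same order as $|a'b'|\sin(\alpha)$.

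Since $a'\in \cone_{\Theta2}(a)$ implies $a\in \cone_{\Theta5}(a')$, the triangle $T(a',a)$ of Lem.~\ref{lem:thetapath} is the canonical downward-pointing equilateral triangle with apex $a'$ and $a$ on its base. I would then split into two subcases according to which side of the bisector of $\cone_{\Theta5}(a')$ the point $b'$ lies on, and choose the vertex $x$ of $T(a',a)$ opposite to $b'$ (with $z$ the remaining vertex). In each subcase, one must certify that the auxiliary triangle $\triangle ayz$ (with $y\in a'z$ and $ay\parallel a'x$) is empty of points of $\Pt$: any interior point would either be strictly closer to $a$ in $\cone_Y(a,b)$ than $b'$, contradicting $\arr{ab'}\in Y_{6k}$, or be a strictly closer incoming Yao neighbor of $b'$ in $\cone_Y(b',a)$ than $a'$, contradicting the Sparse-Yao selection of $\arr{a'b'}$. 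The two branches of the $\max$ align with the two subcases: the $\sqrt{2}$ branch is a clean right-triangle estimate when $b'$ and $a$ lie on the same side of the bisector of $\cone_{\Theta5}(a')$, while the branch $2\sin(\pi/6+\alpha)/(\sqrt{3}\tan(\pi/6-\alpha))$ arises from propagating the $|a'b'|\sin(\alpha)$ displacement along the $\pi/6$-inclined bounding ray of $\cone_{\Theta5}(a')$, with the factor $\tan(\pi/6-\alpha)$ absorbing the worst-case tilt permitted inside the Yao cone of width $\alpha$.

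The main obstacle will be the emptiness certification of $\triangle ayz$ in each subcase; this is where the Yao and Sparse-Yao filters must interlock just tightly enough to rule out every possible intruding point, and the argument is delicate because $b'$ itself already occupies part of the surrounding region. Once emptiness is in hand, Lem.~\ref{lem:thetapath} yields $|p_\Theta(a',a)|\le |a'x|+|a'y|$, and a direct trigonometric computation in each subcase rewrites $|a'x|+|a'y|$ as $|a'b'|\sin(\alpha)$ times the corresponding coefficient inside the $\max$. The per-edge guarantee that every edge of $p_\Theta(a',a)$ is strictly shorter than $|ab|$ then follows from the companion claim of Lem.~\ref{lem:thetapath} (each edge has length at most $|a'x|$), combined with the geometric fact that $|a'x|$ is proportional to $|aa'|$, which in turn is controlled by $|a'b'|\sin(\alpha)\le|ab|$ through the same narrow-Yao-cone estimate used above.
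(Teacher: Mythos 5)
Your overall skeleton matches the paper's: reduce to Lem.~\ref{lem:thetapath} applied to $T(a',a)$ (using $a\in\cone_{\Theta5}(a')$), extract the factor $\sin(\alpha)$ from $\ang{ab'a'}<\alpha$ together with $|a'b'|\le|ab'|\le|ab|$, and split into two trigonometric subcases that produce the two branches of the $\max$. But the step you yourself flag as the main obstacle --- certifying that the auxiliary triangle required by Lem.~\ref{lem:thetapath} is empty --- is handled by a mechanism that does not work. You propose that any intruding point would either be closer to $a$ than $b'$ in $\cone_Y(a,b)$, or be a ``closer incoming Yao neighbor'' of $b'$ in $\cone_Y(b',a)$. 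Neither disjunct covers the region in question: the auxiliary triangle sits at $a$ inside $T(a',a)$ and subtends $\pi/3$ angles, so it is not contained in either of those width-$\alpha$ Yao cones, and a point in it need not lie in either cone at all. Moreover, the second disjunct is logically unsound: the Yao--Yao filter at $b'$ only compares edges that are already in $Y_{6k}$, so a point $p\in\cone_Y(b',a)$ with $|pb'|<|a'b'|$ contradicts nothing unless $\arr{pb'}$ happens to be a Yao edge. The correct (and much simpler) certification, which you never invoke, is that $\arr{ab}\in\Theta_6$ forces $T(a,b)$ to be empty, and the residual triangle of $T(a',a)$ (after removing the trapezoid $axa'x'$ cut off by the left side of $T(a,b)$) lies entirely inside $T(a,b)$ because $a'\in\cone_{\Theta2}(a)$.

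Two smaller issues. First, your case-split criterion (``which side of the bisector of $\cone_{\Theta5}(a')$ the point $b'$ lies on'') does not discriminate: that bisector points straight down from $a'$, while $b'$ always lies to the right of it; the paper instead splits on whether $ab'$ lies below or above the bisector of $\cone_{\Theta1}(a)$, and the $\sqrt{2}$ branch corresponds to the ``below'' case via a Law-of-Cosines estimate on the triangle $a'ij$ built from the foot $i$ of the perpendicular from $a'$ onto $ab'$. Second, for the per-edge claim, arguing that $|a'x|$ is ``proportional to $|aa'|=O(|a'b'|\sin\alpha)$'' only gives $|a'x|<|ab|$ for small $\alpha$, whereas the lemma is stated for all $k>1$; the paper instead bounds $|a'x|$ directly by $|as|<|ab'|\le|ab|$, where $s$ is the intersection of $a'b'$ with the upper ray of $\cone_{\Theta1}(a)$. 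The trigonometric endgame you sketch is in the right spirit, but as written the proof has a genuine hole at the emptiness step.
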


%%%%%%%%%%%%%%%%%%%%%%%%%%%%%%%%%Figure Begin
\begin{figure}[htpb]
\centering
\includegraphics[width=\linewidth]{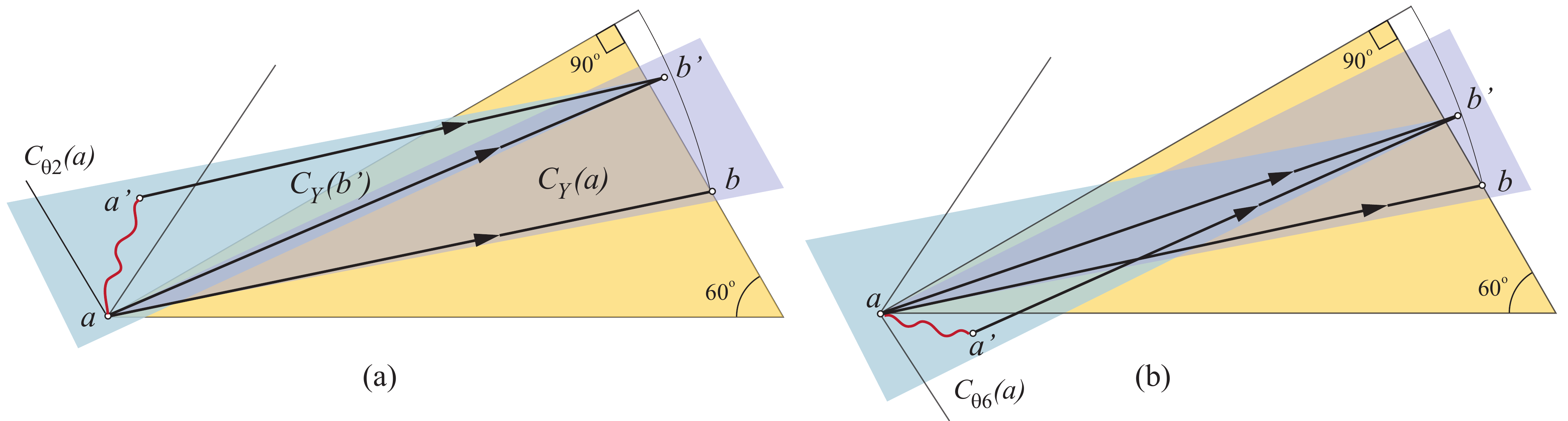}
\caption{$ab \in \Theta_6$, $ab' \in Y_{6k}$, $a'b' \in YY_{6k}$ (a) Lem.~\ref{lem:thetapathaa2}: $a' \in \cone_{\Theta2}(a)$ (b) Lem.~\ref{lem:thetapathaa3}: $a' \in \cone_{\Theta6}(a)$.}
\label{fig:S-thetapathaa2}
\end{figure}
%%%%%%%%%%%%%%%%%%%%%%%%%%%%%%%%%Figure End

\noindent
Lemma~\ref{lem:thetapathaa3} below complements Lem.~\ref{lem:thetapathaa2} regarding the relative position of $ab'$ and $a'b'$. The upper bound derived in this lemma may look somewhat unpolished, however it is intentionally left in a form that is most useful to the main theorem (Thm.~\ref{thm:maintheta}) from \S\ref{sec:main}.
\begin{lemma}
Fix an integer $k > 2$ and angle $\alpha = \frac{\pi}{3k}$. Let $a$, $b$, $a'$ and $b'$ be distinct points in $\Pt$ that satisfy the following properties: (i) $\arr{ab} \in \Theta_6$ lies in $\cone_{\Theta1}(a)$, below the bisector of $\cone_{\Theta1}(a)$ (ii) $\arr{ab'} \in Y_{6k}$ lies in $\cone_Y(a, b)$,
and (iii) $\arr{a'b'} \in YY_{6k}$ lies in $\cone_Y(b', a)$ below $ab'$ such that $a' \in \cone_{\Theta6}(a)$.
Let $h$ be the point on $ab$ such that $a'h$ is perpendicular on $ab$. Then there is a path $p_\Theta(a', a)$ in $\Theta_6$ of length
\[
|p_\Theta(a',a)| < |ah| + (1+2/\sqrt{3})|a'h|
\]
Furthermore, each edge of $p_\Theta(a',a)$ is strictly shorter than $ab$.
\emph{[Refer to Fig.~\ref{fig:S-thetapathaa2}b.]}
\label{lem:thetapathaa3}
\end{lemma}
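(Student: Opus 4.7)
The plan is to apply Lem.~\ref{lem:thetapath} to the pair $(a',a)$ and establish its emptiness hypothesis by exploiting $T(a,b)$. Since $a'\in\cone_{\Theta6}(a)$, the point $a$ lies in $\cone_{\Theta3}(a')$, so $T(a',a)$ is the equilateral triangle with apex $a'$ whose two bounding sides run along the rays of $\cone_{\Theta3}(a')$. Let $v$ denote the vertex of $T(a',a)$ at angle $2\pi/3$ from $a'$, and let $w$ denote the vertex at angle $\pi$ from $a'$. I apply Lem.~\ref{lem:thetapath} with the identifications $z=v$ and $x=w$, so the auxiliary point $y$ is the intersection of segment $a'v$ with the line through $a$ parallel to $a'w$.

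The crucial geometric claim I intend to establish is that $v$ lies on the upper bounding ray of $\cone_{\Theta1}(a)$ (the ray from $a$ at angle $\pi/3$) and $y$ lies on the lower bounding ray (the ray from $a$ at angle $0$). This is a routine angle chase exploiting the $60^{\circ}$ rotational symmetry that relates $\cone_{\Theta3}(a')$ to $\cone_{\Theta1}(a)$: the equilateral structure of $T(a',a)$, combined with the requirement that $a$ lie on its third side, forces $v$ onto the ray from $a$ at angle $\pi/3$, and the horizontal direction of $a'w$ forces the line through $a$ parallel to $a'w$ to coincide with the ray from $a$ at angle $0$. Consequently the corner triangle $\triangle ayv$ sits inside $\cone_{\Theta1}(a)$ with its two edges from $a$ running along the two bounding rays of that cone. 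The Yao/Yao--Yao comparison $|a'b'|\le|ab'|\le|ab|$, together with the hypothesis that $a$ and $a'$ share the same cone of angular width $\pi/(3k)$ at $b'$, keeps $|aa'|$---and hence $|ay|$ and $|av|$---strictly below the side length of $T(a,b)$ for $k>2$, so $v$ and $y$ fall within the corresponding two sides of $T(a,b)$. Since $T(a,b)$ is empty of points in $\Pt\setminus\{a,b\}$ (as $ab\in\Theta_6$) and $b\notin\triangle ayv$, Lem.~\ref{lem:thetapath} applies and yields $|p_\Theta(a',a)|\le|a'w|+|a'y|$ with each edge of $p_\Theta(a',a)$ of length at most $|a'w|$.

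It remains to evaluate the right-hand side and to check the strict per-edge bound. Writing $\phi\in[0,\pi/6)$ for the angle between $ab$ and the lower bounding ray of $\cone_{\Theta1}(a)$, a direct trigonometric computation using the explicit positions of $v$, $w$, and $y$ gives
\[
|a'w|+|a'y|=2\cos(\phi+\pi/3)\,|ah|+2\cos(\phi-\pi/6)\,|a'h|,
\]
and the inequalities $2\cos(\phi+\pi/3)\le 1$ and $2\cos(\phi-\pi/6)<2<1+2/\sqrt{3}$ combine (using $|a'h|>0$) to give the strict bound $|p_\Theta(a',a)|<|ah|+(1+2/\sqrt{3})\,|a'h|$. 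For the per-edge claim, $|a'w|$ equals the side length of $T(a',a)$, and the same Yao/Yao--Yao estimate on $|aa'|$, combined with the constraint that $a'\in\cone_{\Theta6}(a)$ and $b'\in\cone_{\Theta1}(a)$ lie in distinct $\Theta_6$ cones, keeps $|a'w|$ strictly below $|ab|$ for $k>2$. The main obstacle I expect is the containment $\triangle ayv\subseteq T(a,b)$---specifically the careful verification that $v$ and $y$ land on the respective bounding rays of $\cone_{\Theta1}(a)$ within the sides of $T(a,b)$ uniformly over every admissible $\phi$ and every admissible position of $a'$. The hypothesis ``$a'b'$ below $ab'$'' is precisely what fixes the orientation of $T(a',a)$ so that the upper vertex $v$ plays the role of $z$; under the opposite orientation the analogous corner triangle would swing outside $T(a,b)$ and a different argument would be required.
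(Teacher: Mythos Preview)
Your approach is essentially the same as the paper's: both apply Lem.~\ref{lem:thetapath} to the pair $(a',a)$ using the corner of $T(a',a)$ at $a$ as the empty triangle (the paper phrases this as removing the trapezoid $axa'x'$ from $T(a',a)$, with $x$ your $w$ and $x'$ your $y$), and both observe that this corner sits inside $T(a,b)$. Your explicit formula $|a'w|+|a'y|=2\cos(\phi+\pi/3)\,|ah|+2\cos(\phi-\pi/6)\,|a'h|$ is correct and gives the bound cleanly; the paper reaches the same inequality through a chain of triangle inequalities via an auxiliary point $i$ on $ab$.

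One point where your write-up is looser than the paper's: you justify both the containment $\triangle ayv\subseteq T(a,b)$ and the per-edge bound $|a'w|<|ab|$ by asserting that the Yao/Yao--Yao constraints keep $|aa'|$ small. That statement is true under the full hypotheses, but the bare fact that $a$ and $a'$ share a cone of width $\alpha$ at $b'$ together with $|a'b'|\le|ab'|$ only gives $|aa'|<|ab'|$ in the limit---you need the extra constraint $a'\in\cone_{\Theta6}(a)$, and even then the bound on $|a'w|$ via $|aa'|$ is not immediate (since $|a'w|$ can exceed $|aa'|$ by a factor up to $2/\sqrt{3}$). The paper sidesteps this: for the containment it simply notes that $x'$ lies on the bottom side of $T(a,b)$ by construction, and for the per-edge bound it compares $|a'w|$ directly to $|ao|$, where $o$ is the crossing of $a'b'$ with $ab$, and then uses $|ao|<|ab|$. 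You should tighten those two steps along the same lines; the rest of your argument is fine.
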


\section{$YY_{6k}$ is a Spanner}
\label{sec:main}
This section contains our main result, which shows that there is an infinite class of sparse Yao graphs that are spanners. In particular, we show that $YY_{6k}$ is a $t$-spanner, for $k \ge 6$ and $t = 11.67$. Our approach takes advantage of the empty triangular area embedding each edge in $\Theta_6$, and establishes ``short'' paths in $YY_{6k}$ between the endpoints of each edge in $\Theta_6$. This, combined with the result of Thm.~\ref{thm:theta6}, yields our main result.

\begin{theorem}
For each edge $\arr{ab} \in \Theta_6$, there is a path $\pyy(a, b)$ in $YY_{6k}$ of length
$|\pyy(a, b)| \le t|ab|$, for any $k \ge 6$ and $t = 5.832$.
\label{thm:maintheta}
\end{theorem}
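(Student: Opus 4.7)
The plan is to use strong induction on the length of $\arr{ab}$ in $\Theta_6$. By symmetry we may assume that $\arr{ab}$ lies in $\cone_{\Theta 1}(a)$ below the bisector of $\cone_{\Theta 1}(a)$ (the case on or above the bisector is handled by a mirror argument). Let $\cone_Y(a,b)$ denote the $Y_{6k}$-cone at $a$ containing $b$, and let $b'$ be the vertex chosen by $Y_{6k}$ in this cone, so $\arr{ab'} \in Y_{6k}$ and $|ab'| \le |ab|$. The argument then splits on whether $\arr{ab'}$ survives the second Yao-filter.

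In the easy case, $\arr{ab'} \in YY_{6k}$, and we take $\pyy(a,b) = \arr{ab'} \oplus \pyy(b', b)$, where $\pyy(b',b)$ is obtained by invoking Lem.~\ref{lem:thetapathbb} to produce a $\Theta_6$-path from $b'$ to $b$ of length at most $\sqrt{3}|b'e|$ (with $e$ the foot of the perpendicular from $b'$ to $ab$), and then recursively replacing each $\Theta_6$-edge on that path with its $YY_{6k}$-subpath of stretch at most $t$ from the inductive hypothesis. In the hard case, $\arr{ab'} \notin YY_{6k}$, so by the definition of $YY_{6k}$ there exists $a' \neq a$ lying in the same $Y_{6k}$-cone at $b'$ as $a$, with $\arr{a'b'} \in YY_{6k}$ and $|a'b'| \le |ab'|$. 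A preliminary geometric argument in the narrow cone $\cone_Y(b', a)$ rules out $a' \in \cone_{\Theta 1}(a) \cup \cone_{\Theta 4}(a)$, so $a'$ must lie in $\cone_{\Theta 2}(a) \cup \cone_{\Theta 3}(a) \cup \cone_{\Theta 5}(a) \cup \cone_{\Theta 6}(a)$, triggering exactly one of Lemmas~\ref{lem:thetapathaa1}, \ref{lem:thetapathaa2}, or \ref{lem:thetapathaa3}. We then take
\[
\pyy(a,b) \;=\; \pyy(a,a') \oplus \arr{a'b'} \oplus \pyy(b',b),
\]
where each of $\pyy(a,a')$ and $\pyy(b',b)$ is obtained by applying the appropriate lemma and inductively expanding its $\Theta_6$-edges into $YY_{6k}$-subpaths.

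Since each of Lemmas~\ref{lem:thetapathbb}--\ref{lem:thetapathaa3} guarantees that the edges of the supplied $\Theta_6$-path are strictly shorter than $|ab|$, the strong induction goes through; the base case is the shortest edge in $\Theta_6$, which necessarily belongs to $YY_{6k}$ (no shorter incoming edge can displace it). To close the numerical estimate, we express $|ab'|$, $|b'e|$, $|a'b'|$, and the lemma bounds as linear functions of $|ab|$ via trigonometry in the $Y_{6k}$-cone of half-aperture $\pi/(6k) \le \pi/36$. Each case then yields a linear inequality of the form
\[
|ab'| \,+\, t\cdot L_{aa'} \,+\, |a'b'| \,+\, t\cdot L_{b'b} \;\le\; t\,|ab|,
\]
where $L_{aa'}$ and $L_{b'b}$ are the respective $\Theta_6$-path-length bounds supplied by the lemmas, and the reported value of $t$ is the maximum over all cases and over all $k \ge 6$.

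The main obstacle is the sub-case within Case~B in which $a' \in \cone_{\Theta 6}(a)$, handled by Lem.~\ref{lem:thetapathaa3}: its bound $|ah| + (1 + 2/\sqrt{3})|a'h|$ does not shrink as the $Y_{6k}$-cone narrows, and $|ah|$ can approach $|ab|$, so this configuration is what I expect to pin down the stretch constant at $t = 5.832$ rather than something smaller driven by the narrow-cone lemmas.
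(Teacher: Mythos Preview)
Your plan matches the paper's proof almost exactly: the same strong induction on edge length, the same WLOG reduction to $\cone_{\Theta1}(a)$ below the bisector, the same split into the easy case $\arr{ab'}\in YY_{6k}$ and the hard case with $a'\in\cone_{\Theta2}(a)\cup\cone_{\Theta3}(a)\cup\cone_{\Theta5}(a)\cup\cone_{\Theta6}(a)$, and the same path $\pyy(a,a')\oplus a'b'\oplus\pyy(b',b)$ assembled from Lemmas~\ref{lem:thetapathbb}--\ref{lem:thetapathaa3} plus the inductive hypothesis.

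Two corrections. First, in your displayed inequality the term $|ab'|$ should not appear in the hard case; the path uses $a'b'$, not $ab'$. Second, and more substantively, your expectation about the bottleneck is reversed. In the paper's computation the case $a'\in\cone_{\Theta6}(a)$ is the \emph{mildest}, giving only $t\ge 1.63$ for $k\ge 6$; it is the case $a'\in\cone_{\Theta2}(a)$ (Lem.~\ref{lem:thetapathaa2}) that forces $t=5.832$. The reason the $\cone_{\Theta6}$ case is benign is a relation your sketch overlooks: letting $o=a'b'\cap ab$, one has $|a'h|+|b'e|<|a'b'|\sin\alpha$ and $|ah|=|ae|-|he|<|ab|-|a'b'|\cos\alpha$, so that when $|ah|$ is near $|ab|$ the edge $a'b'$ is forced to be short, and the $t|ah|$ contribution telescopes against $t|ab|$. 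Without this observation the inequality in that case does not close at all (you would be trying to show $t|ab|+\text{positive terms}\le t|ab|$), so make sure to build it in rather than bounding $|ah|\le|ab|$ naively.
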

\begin{proof}
Let $\alpha = \pi/(3k)$. The proof is by induction on the length of the edges in $\Theta_6$. The base case corresponds to a shortest edge $\arr{ab} \in \Theta_6$.
In this case we show that $\arr{ab} \in Y_{6k}$ and $\arr{ab} \in YY_{6k}$. Assume to the contrary that
$\arr{ab} \not\in Y_{6k}$, and let $\arr{ac} \in Y_{6k}$, with $|ac| \le |ab|$, be the edge that lies in the cone $\cone$ with apex $a$ containing $ab$. The Law of Cosines applied on $\triangle abc$, along with the fact that $\ang{cab} < \alpha$, yields $|bc|^2 < 2|ab|^2(1-\cos(\alpha))$. Now note that for any $k \ge 3$, $\alpha \le \pi/9$ and $\sqrt{2(1-\cos(\alpha))} < 1/2$, therefore $|bc| < |ab|/2$. This along with Thm.~\ref{thm:theta6} shows that $\Theta_6$ contains a path $p_\Theta(b,c) \le 2|bc| < |ab|$. This implies that all edges on $p_\Theta(b,c)$ are strictly shorter than $ab$, contradicting our assumption that $ab$ is a shortest edge in $\Theta_6$. Similar arguments show that $ab \in YY_{6k}$, so the theorem holds for the base case.

For the inductive step, pick an arbitrary edge $ab \in \Theta_6$, and assume that the theorem holds for all edges in $\Theta_6$ strictly shorter than $ab$. We now seek a path $\pyy(a, b)$ in $YY_{6k}$ that satisfies the conditions of the theorem. We assume without loss of generality that $ab$ lies in the first cone $\cone_\Theta = \cone_{\Theta1}(a)$; if this is not the case, then we can always rotate the point set $\Pt$ about $a$ by a multiple of $\pi/3$ so that our assumption holds. (Note that the edge sets for $\Theta_6$ and $YY_{6k}$ remain unaltered by this rotation.) We can also assume that $ab$ lies along or below the bisector of $\cone_\Theta$; the situation in which $ab$ lies above the bisector of $\cone_\Theta$ is symmetric with respect to the bisector of $\cone_\Theta$.
%a $\pi/3$ clockwise rotation about $a$, followed by a reflection about the horizontal through $a$. (Again, note that the edge sets for $\Theta_6$ and $YY_{6k}$ remain unchanged by these transformations.)
Let $\cone_Y(a) \subset \cone_\Theta$ be the Yao cone with apex $a$ containing $b$.

We begin our analysis by considering the most complex case, in which a relevant edge from $\Theta_6$ is not in $Y_{6k}$, and a relevant edge from $Y_{6k}$ is not in $YY_{6k}$. We will see later that all other cases are particular instances of this complex case. Thus we start with the assumption that $ab \not\in Y_{6k}$ and hence $\cone_Y(a)$ must contain an edge $\arr{ab'} \in Y_{6k}$ with the property $|ab'| \le |ab|$. We proceed further through the complex case with the assumption that $\arr{ab'} \not\in YY_{6k}$. Let $\cone_Y(b')$ be the Yao cone with apex $b'$ containing $a$. Then $\cone_Y(b')$ must contain an edge $\arr{a'b'} \in YY_{6k}$ such that $|a'b'| \le |ab'|$. Because $ab \in \Theta_6$, $T(a, b)$ is empty of points in $\Pt$, therefore $b'$ must lie outside of $T(a,b)$. One immediate observation here is that $a'$ must also lie outside of $C_{\Theta1}(a)$; otherwise $|aa'| < |ab'|$, contradicting our assumption that $ab' \in Y_{6k}$.

We first determine a ``short'' path from $b'$ to $b$ in $YY_{6k}$. Let $e$ be the foot of the perpendicular from $b'$ on $ab$ (see Fig.~\ref{fig:proof1}a). By Lem.~\ref{lem:thetapathbb}, there is a path $p_\Theta(b',b)$ in $\Theta_6$ of length $|p_\Theta(b',b)| \le \sqrt{3}|b'e|$. Also according to Lem.~\ref{lem:thetapathbb}, each edge of $p_\Theta(b',b)$ is strictly shorter than $ab$. This enables us to apply the inductive hypothesis on each edge $xy \in p_\Theta(b',b)$, and claim the existence of a path $\pyy(x, y)$ in $YY_{6k}$ of length $|\pyy(x, y)| \le t|xy|$. Concatenating these paths and summing up the inequalities for all edges on $p_\Theta(b',b)$ yields a path $\pyy(b',b)$ in $YY_{6k}$ of length
\begin{equation}
|\pyy(b',b)| \le t\sqrt{3}|b'e|.
\label{eq:case0-1}
\end{equation}
We now express $b'e$ in terms of $ab'$ and angle $\alpha > \ang{b'ae}$ (this latter inequality holds because $b'$ and $e$ are in the same half-open Yao cone $\cone_Y(a)$ of angle $\alpha$). The fact $\alpha > \ang{b'ae}$ implies  $\sin(\alpha) > \sin(\ang{b'ae}) = |b'e| / |ab'|$, or in a simpler form $|b'e| < |ab'|\sin(\alpha)$. We substitute this inequality in~(\ref{eq:case0-1}) to obtain
\begin{equation}
|\pyy(b',b)| \le t|ab'|\sin(\alpha)\sqrt{3}.
\label{eq:case0-2}
\end{equation}
Next we focus our attention on determining a ``short'' path $\pyy(a, a')$ from $a$ to $a'$ in $YY_{6k}$. Given this, we can subsequently define a path $\pyy(a, b)$ from $a$ to $b$ as
\begin{equation}
\pyy(a,b) = \pyy(a, a') \oplus a'b' \oplus \pyy(b',b).
\label{eq:pab}
\end{equation}
Depending on the relative position of $a$ and $a'$, we must consider four possible cases. First note that $|a'b'| \le |ab'|$ implies that $\ang{a'ab'} < \pi/2$, and therefore $a'$ must lie in the open half-plane that contains $b'$ and is delimited by the perpendicular on $ab'$ through $a$. Because $ab' \in \cone_{\Theta1}(a)$, this half-plane shares no points with $C_{\Theta4}$, however it may share points with any other $\Theta$-cone apexed at $a$.
Thus $a' \in \cone_{\Theta2}(a) \cup \cone_{\Theta3}(a) \cup \cone_{\Theta5}(a) \cup \cone_{\Theta6}(a)$. We consider each of these situations in turn.

%%%%%%%%%%%%%%%%%%%%%%%%%%%%%%%%%Figure Begin
\begin{figure}[htpb]
\centering
\includegraphics[width=\linewidth]{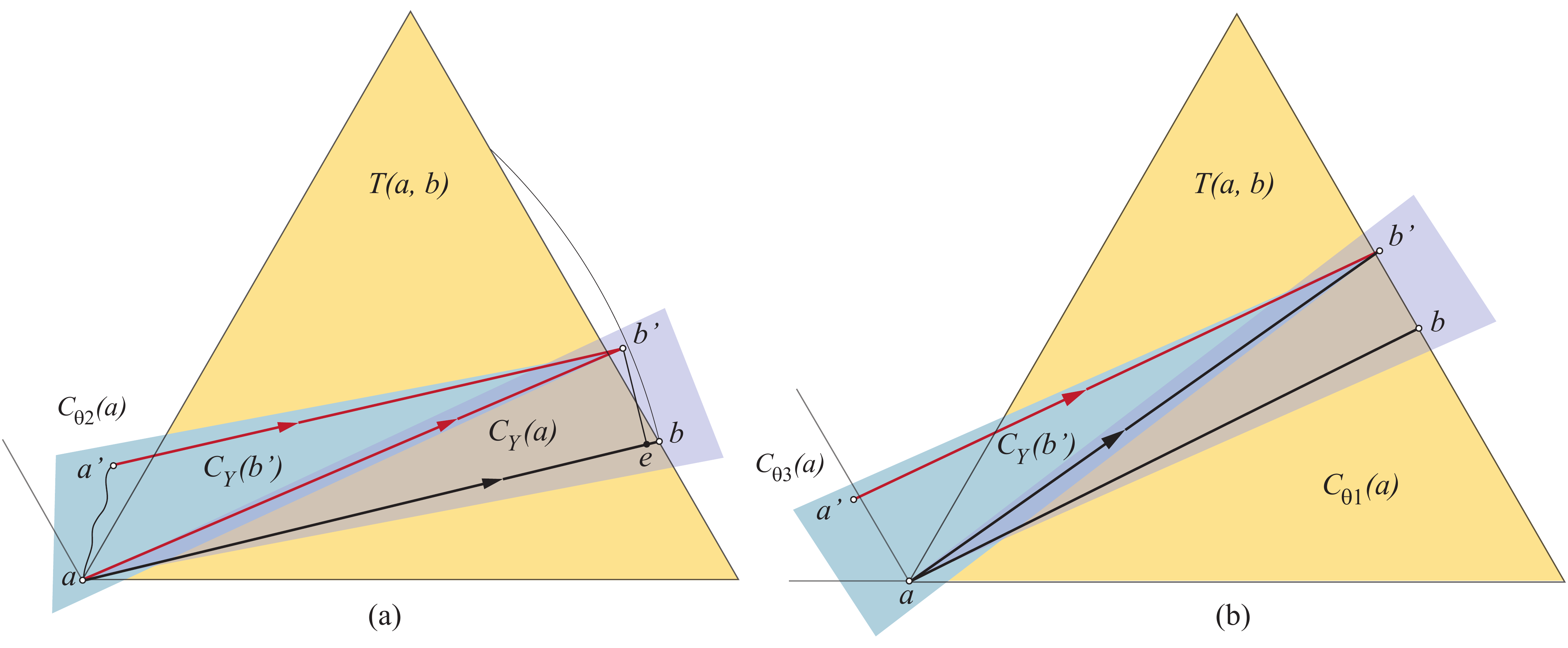}
\caption{Thm.~\ref{thm:maintheta} (a) Case 1: $a' \in \cone_{\Theta2}(a)$ (b) Case 2: $a' \in \cone_{\Theta3}(a)$.}
\label{fig:proof1}
\end{figure}
%%%%%%%%%%%%%%%%%%%%%%%%%%%%%%%%%Figure End

\paragraph{Case 1:} $a' \in \cone_{\Theta2}(a)$. (Refer to Fig.~\ref{fig:proof1}a.) This situation meets the conditions of Lem.~\ref{lem:thetapathaa2}, which tells us that $\Theta_{6}$ contains a path $p_\Theta(a',a)$ from $a'$ to $a$ of length
\begin{equation}
|p_\Theta(a',a)| < |a'b'|\sin(\alpha)(1 + M), \mbox{where }
M = \max\left\{\sqrt{2},\frac{2\sin(\pi/6+\alpha)}{\sqrt{3}\tan(\pi/6-\alpha)}\right\}.
\label{eq:case1-1}
\end{equation}
Lem.~\ref{lem:thetapathaa2} also tells us that each edge of $p_\Theta(a',a)$ is strictly shorter than $ab$. This enables us to use the inductive hypothesis and claim the existence of a path $\pyy(a',a)$ in $YY_{6k}$ of length
\begin{equation}
|p_\Theta(a',a)| < t|a'b'|\sin(\alpha)(1+M).
\label{eq:case1-2}
\end{equation}
Ignoring the direction of the edges, $\pyy(a,a') = \pyy(a',a)$ is a path in $YY_{6k}$ from $a$ to $a'$.
Substituting inequalities~(\ref{eq:case1-2}) and~(\ref{eq:case0-2}) in~(\ref{eq:pab}), and using the fact that
$|a'b'| \le |ab'| \le |ab|$, we derive an upper bound for the length of the path $\pyy(a, b)$ as
\begin{equation}
|\pyy(a,b)| < ab + t|ab|\sin(\alpha)(1 + \sqrt{3} + M).
\label{eq:case1-3}
\end{equation}
To prove the inductive step, we need to show that the right side of the inequality~(\ref{eq:case1-3}) does not exceed $t|ab|$, which (after eliminating the term $|ab|$) holds if
\begin{equation}
1 + t\sin(\alpha)(1+\sqrt{3}+ M) \le t.
\label{eq:case1-4}
\end{equation}
Two conditions must be met in order to satisfy inequality~(\ref{eq:case1-4}):
\[
  \left\{
  \begin{array}{l l}
   1-\sin(\alpha)(1+\sqrt{3}+ M) & > 0 \\
   \left(1-\sin(\alpha)(1+\sqrt{3}+M)\right)^{-1} & \le t\\
  \end{array} \right.
\]
We note that the term $M$ defined in~(\ref{eq:case1-1}) decreases as $\alpha$ decreases, and consequently the term on the left hand side of the first inequality above increases as $\alpha$ decreases. This property helps in verifying that the two inequalities above hold for any $\alpha \le \pi/18$ ($k \ge 6$) and $t \ge 5.832$. We also note that smaller $\alpha$ values imply smaller $t$; for example, for $\alpha \le \pi/30$ ($k \ge 10$), the constraint on $t$ is $t \ge 1.802$.

\paragraph{Case 2:} $a' \in \cone_{\Theta3}(a)$. (Refer to Fig.~\ref{fig:proof1}b.) This situation meets the conditions of Lem.~\ref{lem:thetapathaa1}, which tells us that $\Theta_{6}$ contains a path $p_\Theta(a,a')$ from $a$ to $a'$ of length $|p_\Theta(a',a)| < 4|a'b'|\sin(\alpha)/\sqrt{3}$. Note that for the values of $k$ imposed by the lemma, $\alpha \le \pi/18$ and the term $4|a'b'|\sin(\alpha)/\sqrt{3} < |a'b'| \le |ab'| \le |ab|$. Because the entire path $p_\Theta(a, a')$ is strictly shorter than $ab$, each edge of $p_\Theta(a,a')$ is also strictly shorter than $ab$, so we can use the inductive hypothesis to claim the existence of a path $\pyy(a,a')$ from $a$ to $a'$ in $YY_6$ of length
\begin{equation}
|\pyy(a,a')| \le 4t|a'b'|\sin(\alpha)/\sqrt{3}.
\label{eq:case2-1}
\end{equation}
Substituting inequalities~(\ref{eq:case2-1}) and~(\ref{eq:case0-2}) in~(\ref{eq:pab}), along with $|a'b'| \le |ab'| \le |ab|$,  yields %an upper bound of
\begin{equation}
|\pyy(a,b)| < ab + t|ab|\sin(\alpha)(4/\sqrt{3} + \sqrt{3}).
\label{eq:case2-2}
\end{equation}
To prove the inductive step, we need to show that the right side of the inequality~(\ref{eq:case2-2}) does not exceed $t|ab|$, which (after eliminating the term $|ab|$) holds if
\begin{equation}
1 + 7t\sin(\alpha)/\sqrt{3} \le t.
\label{eq:case2-3}
\end{equation}
Two conditions must be met in order to satisfy inequality~(\ref{eq:case2-3}):
\[
  \left\{
  \begin{array}{l l}
   1-7\sin(\alpha)/\sqrt{3} & > 0 \\
   \left(1-7\sin(\alpha)/\sqrt{3}\right)^{-1} & \le t\\
  \end{array} \right.
\]
Again, note that the term on the left hand side of the first inequality increases as $\alpha$ decreases. This property helps in verifying that the two inequalities above hold for any $\alpha \le \pi/18$ ($k \ge 6$) and $t \ge 3.36$. The lower bound on $t$ drops down to $2.11$ for $k = 8$, and lowers to $1.73$ for $k = 10$.

\paragraph{Case 3:} $a' \in \cone_{\Theta5}(a)$. This case is depicted in Fig.~\ref{fig:proof2}a. The result of  Lem.~\ref{lem:thetapathaa1} and the arguments used for Case 2 above apply here as well, yielding the same lower bounds for $k$ and $t$ (and upper bound for $\alpha$) as as in Case 2.

%%%%%%%%%%%%%%%%%%%%%%%%%%%%%%%%%Figure Begin
\begin{figure}[htpb]
\centering
\includegraphics[width=\linewidth]{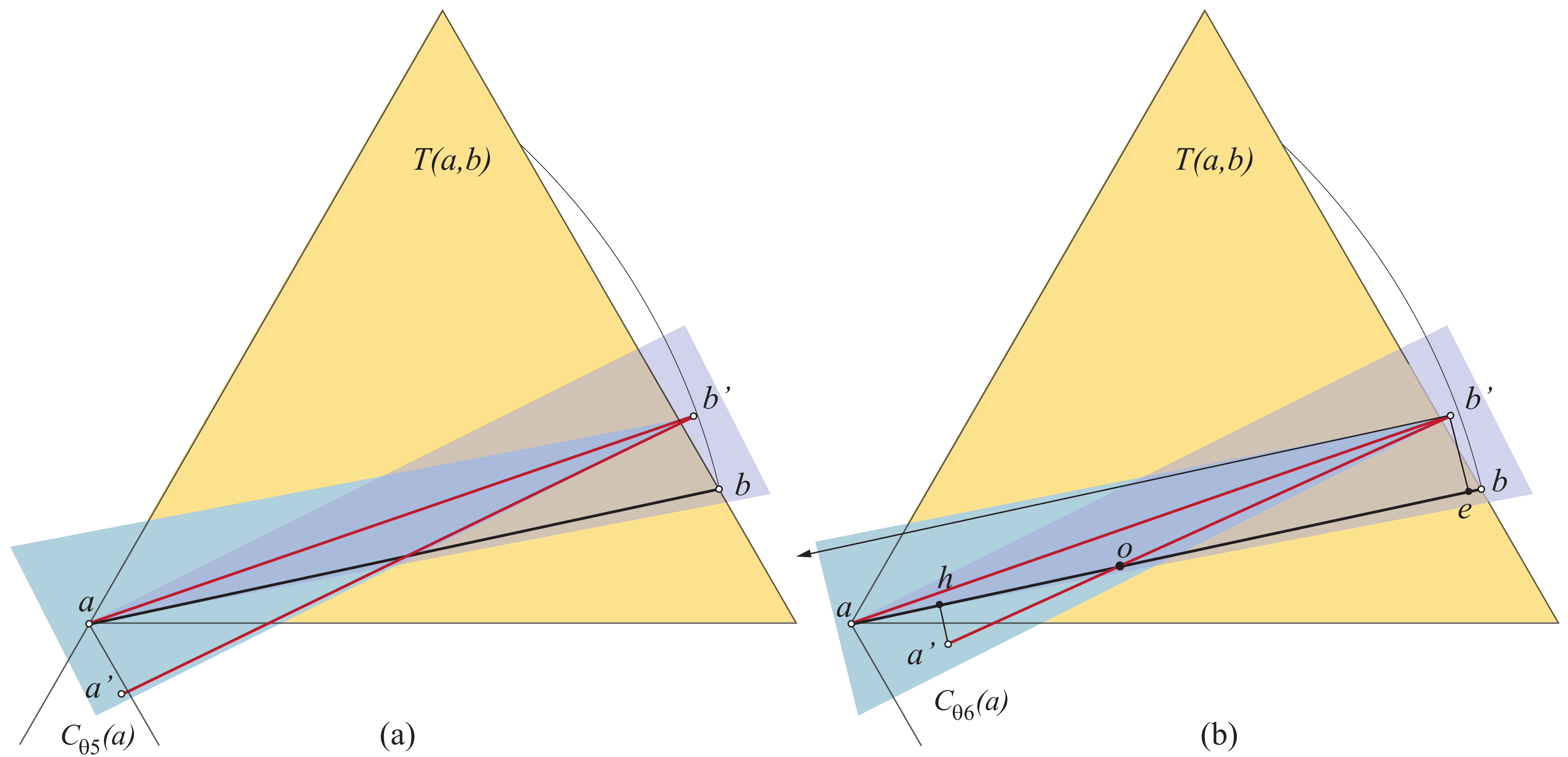}
\caption{Thm.~\ref{thm:maintheta} (a) Case 3: $a' \in \cone_{\Theta5}(a)$ (b) Case 4: $a' \in \cone_{\Theta6}(a)$.}
\label{fig:proof2}
\end{figure}
%%%%%%%%%%%%%%%%%%%%%%%%%%%%%%%%%Figure End

\paragraph{Case 4:} $a' \in \cone_{\Theta6}(a)$. This case is depicted in Fig.~\ref{fig:proof2}b.
Let $h$ be the foot of the perpendicular from $a'$ on $ab$. This context matches the one of Lem.~\ref{lem:thetapathaa3}, so we can use it to claim the existence of a path $p_\Theta(a',a)$ from $a'$ to $a$ of length $|p_\Theta(a',a)| < |ah| + (1+2/\sqrt{3})|a'h|$. Also according to Lem.~\ref{lem:thetapathaa3}, each edge of $p_\Theta(a',a)$ is strictly shorter than $ab$. This enables us to use the inductive hypothesis and claim the existence of a path $p_\Theta(a',a)$ in $YY_{6k}$ of length
\begin{equation}
|\pyy(a',a)| \le t|p_\Theta(a',a)| < t|ah| + t(1+2/\sqrt{3})|a'h|.
\label{eq:case4-1}
\end{equation}
Ignoring the direction of the edges, $\pyy(a, a') = \pyy(a', a)$ is a path from $a$ to $a'$ in $YY_{6k}$. By inequalities~(\ref{eq:case4-1}) and~(\ref{eq:case0-1}), an upper bound for the length of the path $\pyy(a, b)$ defined in~(\ref{eq:pab}) is
\begin{equation}
|\pyy(a,b)| < |a'b'| + t|ah| + t(1+2/\sqrt{3})(|a'h|+|b'e|).
%|\pyy(a,b)| < |a'b'| + t|ah| + t\left(1+\frac{2}{\sqrt{3}}\right)(|a'h|+|b'e|).
\label{eq:mainpab1}
\end{equation}
(In deriving the right side term above, we used the fact that $\sqrt{3} < 1+2/\sqrt{3}$.)
Now note that the ray with origin $b'$ parallel to $ab$ lies inside $C_Y(b')$, therefore the angle formed by this ray with $a'b'$ is smaller than $\alpha$. It follows that $\ang{aoa'} < \alpha$. This in turn implies that
$\sin(\alpha) > \sin(\ang{aoa'}) = |a'h|/|oa'|$, or equivalently $|a'h| < |oa'|\sin(\alpha)$. Similarly,
$|b'e| < |ob'|\sin(\alpha)$. Substituting these two latter inequalities in~(\ref{eq:mainpab1}), and using
the fact that $|oa'| + |ob'| = |a'b'|$, yields the upper bound
\begin{equation}
|\pyy(a,b)| < |a'b'| + t|ah| + t(1+2/\sqrt{3})|a'b'|\sin(\alpha).
%|\pyy(a,b)| < |a'b'| + t|ah| + t\left(1+\frac{2}{\sqrt{3}}\right)|a'b'|\sin(\alpha).
\label{eq:mainpab2}
\end{equation}
We now express $|ah| = |ae| - |he|$ in terms of $ab$ and $a'b'$. We have already established that
$\ang{aoa'} < \alpha$, therefore $\cos(\alpha) < \cos(\ang{aoa'}) = |ho|/|a'o|$, or equivalently $|ho| > |a'o|\cos(\alpha)$. Similarly,
$|oe| > |ob'|\cos(\alpha)$. Summing up these two inequalities yields $|he| = |ho|+|oe| > (|a'o|+|ob'|)\cos(\alpha) = |a'b'|\cos(\alpha)$. It follows that $|ah| = |ae| - |he| < |ab| - |a'b'|\cos(\alpha)$. Substituting this inequality in~(\ref{eq:mainpab2}) yields
\begin{equation}
|\pyy(a,b)| \le |a'b'| + t\left(ab - |a'b'|\cos(\alpha) + (1+2/\sqrt{3})|a'b'|\sin(\alpha)\right).
%|\pyy(a,b)| \le |a'b'| + t\left(ab - |a'b'|\cos(\alpha) + \left(1+\frac{2}{\sqrt{3}}\right)|a'b'|\sin(\alpha)\right).
\label{eq:mainpab3}
\end{equation}
To prove the inductive step, we need to show that $|\pyy(a,b)| \le t|ab|$, which according to inequality~(\ref{eq:mainpab3}) holds if (after eliminating the term $|a'b'|$)
\[
1 - t\cos(\alpha) + t(1+2/\sqrt{3})\sin(\alpha) \le 0.
%1 - t\cos(\alpha) + t\left(1+\frac{2}{\sqrt{3}}\right)\sin(\alpha) \le 0.
\]
Two conditions must be met in order to satisfy the inequality above:
\[
  \left\{
  \begin{array}{l l}
   \cos(\alpha) - (1+2/\sqrt{3})\sin(\alpha) & > 0 \\
   \left(\cos(\alpha) - (1+2/\sqrt{3})\sin(\alpha)\right)^{-1} & \le t\\
%   \cos(\alpha) - (1+\frac{2}{\sqrt{3}})\sin(\alpha) & > 0 \\
%   \left(\cos(\alpha) - (1+\frac{2}{\sqrt{3}})\sin(\alpha)\right)^{-1} & \le t\\
  \end{array} \right.
\]
We note that the term $\cos(\alpha) - (1+2/\sqrt{3})\sin(\alpha)$ increases as $\alpha$ decreases. This property helps in verifying that the two inequalities above hold for any $\alpha \le \pi/9$ ($k \ge 3$) and $t \ge 4.94$. The lower bound on $t$ drops down to $1.63$ for $k \ge 6$.

\medskip
\noindent
It remains to discuss the simpler cases in which $ab \in Y_{6k}$ or $ab' \in YY_{6k}$ (or both). We show that these are special cases of the above. Consider first the case in which $ab \in Y_{6k}$. If $ab \in YY_{6k}$ as well, then $p(a, b) = ab$ and the theorem holds. Otherwise, we let $b' = b$ and define $\pyy(a, b) = \pyy(a, a') \oplus a'b$; if $a' \in \cone_{\Theta6}(a)$ as in case 4 above, we also let $o = b$, so that the analysis for case 4 applies here as well; the other cases (1, 2 and 3) need no special adjustments. Similarly, if $ab' \in YY_{6k}$, we let $a'=a$ and define $\pyy(a, b) = ab' \oplus  \pyy(b', b)$; then the analysis for case 2 above settles this entire case.  Now note that the upper bound for $|\pyy(a,b)|$ yielded by the above analysis for these special cases is slightly smaller that the one obtained for the general case (because it does not include one of the strictly positive terms $|\pyy(a, a')|$ or $|\pyy(b', b)|$), therefore the spanning condition $\pyy(a, b) \le t|ab|$ holds for the same values of $t$ and $\alpha$.
\end{proof}

\medskip
\noindent
Thms.~\ref{thm:theta6} and~\ref{thm:maintheta} together yield the main result of this paper, stated in Thm.~\ref{thm:main} below.
\begin{theorem}
For any $k \ge 6$, $YY_{6k}$ is a $t$-spanner, with $t = 11.67$.
\label{thm:main}
\end{theorem}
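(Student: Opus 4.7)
The plan is to combine the two preceding theorems by a direct path-substitution argument, so there is no new geometry to do; the entire result is a composition. Given any two points $a,b \in \Pt$, I would first invoke Theorem~\ref{thm:theta6} to obtain a path $p_\Theta(a,b)$ in $\Theta_6$ of total length at most $2|ab|$. Then, for every edge $xy$ of $p_\Theta(a,b)$, Theorem~\ref{thm:maintheta} yields a replacement path $\pyy(x,y)$ in $YY_{6k}$ of length at most $5.832\,|xy|$, valid because the hypothesis $k \ge 6$ carries over verbatim.

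Concatenating these replacement paths in the order in which their endpoints occur along $p_\Theta(a,b)$ produces a single walk in $YY_{6k}$ from $a$ to $b$; call it $\pyy(a,b)$. Summing the inequality $|\pyy(x,y)| \le 5.832\,|xy|$ over all edges of $p_\Theta(a,b)$ and then applying the $\Theta_6$ bound would give
\[
|\pyy(a,b)| \;\le\; 5.832 \cdot |p_\Theta(a,b)| \;\le\; 5.832 \cdot 2|ab| \;=\; 11.664\,|ab| \;<\; 11.67\,|ab|,
\]
which is exactly the spanning condition claimed.

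There is effectively no obstacle to overcome, since both inputs to the argument are already proved. The only tiny technicality to flag is directional: Theorem~\ref{thm:theta6} gives an undirected path, and for each of its edges $xy$ Theorem~\ref{thm:maintheta} is stated for the directed edge $\arr{xy} \in \Theta_6$. This is immediately reconciled by the convention stated at the beginning of Section~\ref{sec:basic}, where paths in $\Theta_6$, $Y_{6k}$, and $YY_{6k}$ are viewed as undirected, so the orientation of each edge along $p_\Theta(a,b)$ is irrelevant when invoking Theorem~\ref{thm:maintheta}. With this understood, the two-line calculation above closes the proof.
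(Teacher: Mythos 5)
Your proposal is correct and is exactly the argument the paper intends: the paper simply states that Theorems~\ref{thm:theta6} and~\ref{thm:maintheta} together yield Theorem~\ref{thm:main}, which is precisely your edge-by-edge substitution giving $2 \times 5.832 = 11.664 < 11.67$. Your remark about treating the paths as undirected is also consistent with the paper's stated conventions.
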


\section{Proofs of Lemmas from \S\ref{sec:basic}}
\label{sec:proofs}

\subsection{Proof of Lem.~\ref{lem:thetapath}}
The proof is by induction on the pairwise distances between the points in $\Pt$.
%%%%%%%%%%%%%%%%%%%%%%%%%%%%%%%%%Figure Begin
\begin{figure}[htpb]
\centering
\includegraphics[width=0.85\linewidth]{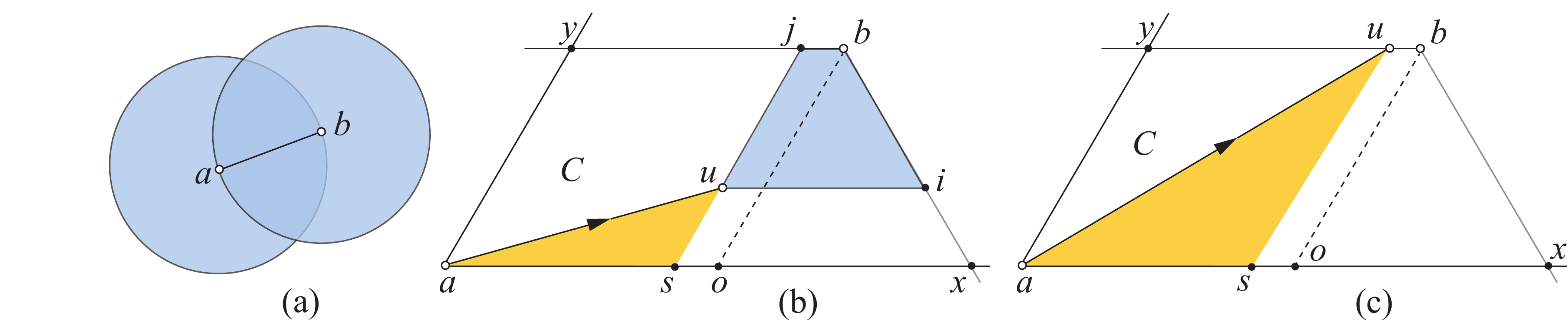}
\caption{Lemma~\ref{lem:thetapath}: (a) Open region empty of points in $\Pt$ (b,c) $au \in \Theta_6$ does not intersect $bo$.}
\label{fig:trapezoid1}
\end{figure}
%%%%%%%%%%%%%%%%%%%%%%%%%%%%%%%%%Figure End
The base case corresponds to a closest pair of points $a, b \in \Pt$. In this case, the circle centered at $a$ of radius $|ab|$ is empty of points in $\Pt$, and similarly for the circle centered at $b$ of radius $ab$. Any equilateral triangle with vertex $a$ and $b$ on its boundary fits inside the union of these two circles (see Fig.~\ref{fig:trapezoid1}a). This along with the fact that cones are half-open and half-closed implies that $\arr{ab} \in \Theta_6$ and the lemma holds for this case.

For the inductive case, pick an arbitrary pair of points $a, b \in \Pt$, and assume that the lemma holds for any pair of points at distance less than $|ab|$.
% [MD] -- I don't see the need for this anymore
%\red{Discuss here the case when $ab$ lies along the ray that is in $C$ -- in this case, $ab$ is horizontal.}
Let $\cone = \cone_\Theta(a, b)$, and let $\arr{au} \in \Theta_6$ be the edge in $\cone$ incident to $a$ (note that $\arr{au}$ exists, because $\cone$ contains $b$ and therefore is non-empty).
First note that $au$ may not cross over to the other side of $xz$, because in that case the projection of $u$ on the bisector of $\cone$ would be farther from $a$ than the projection of $b$ on the bisector of $\cone$, contradicting $\arr{au} \in \Theta_6$. This along with the fact that $\triangle ybz$ is empty of points in $\Pt$ shows that $u$ lies in the closed region $axby$. Next we focus on determining a ``short'' path $p_\Theta(u, b)$ from $u$ to $b$. Given this, we can subsequently define a path
\begin{equation}
p_\Theta(a, b) = au \oplus p_\Theta(u, b).
\label{eq:aub}
\end{equation}
Let $s$ be the intersection point between $ax$ and the line parallel to $ay$ passing through $u$. Let $o$ be the point on $ax$ such that $bo$ is parallel to $ay$. By the triangle inequality
\begin{equation}
|au| < |as| + |su|.
\label{eq:actri}
\end{equation}
We distinguish two cases, depending on whether $au$ intersects $bo$ or not.
Assume first that $au$ does not intersect $bo$. Then $|ub| < |ab|$. We have already established that $u$ lies in the closed region $axby$, so in this case $u$ lies either interior to $aoby$, or on $by$. Consider first the situation in which $u$ is interior to $aoby$, depicted in Fig.~\ref{fig:trapezoid1}b. In this case $b \in \cone(u)$. Let $i$ and $j$ be the intersection points between the bounding rays of $\cone(u)$ and $bx$ and $by$, respectively. Note that the equilateral triangle obtained by removing the trapezoid $uibj$ from $T(u, b)$ lies inside $\triangle ybz$, which is empty of points in $\Pt$ (by the lemma statement). Thus the inductive hypothesis applies here to show that there is a path $p_\Theta(u, b)$ from $u$ to $b$ of length
\begin{equation}
|p_\Theta(u, b)| \le |ui| +|uj|.
\label{eq:pcb}
\end{equation}
Substituting inequalities~(\ref{eq:pcb}) and~(\ref{eq:actri}) in~(\ref{eq:aub}) yields $|p_\Theta(a, b)| < (|as|+ |ui|) +(|su| + |uj|) \le |ax| + |ay|$. (Here we used the fact that $|ui| \le |sx|$ and $|su| + |uj| = |ay|$.) So the first claim of the lemma holds in this case. The second claim of the lemma follows immediately from the fact that $|au| < |ax|$ and the inductive hypothesis, by which each edge of $p_\Theta(u, b)$ is no longer than $|ui| \le |sx| < |ax|$.

If $u$ lies on the line segment $by$ (as in Fig.~\ref{fig:trapezoid1}c), then the trapezoid $uibj$ from Fig.~\ref{fig:trapezoid1}b degenerates to the line segment $ub$. The equilateral triangle with side $ub$ that lies in $\triangle byz$ coincides with one of $T(u, b)$ or $T(b, u)$, and is empty of points in $\Pt$. So the induction hypothesis applies again to show that $\Theta_6$ contains a path between $u$ and $b$ no longer than $|ub|$. This along with inequality~(\ref{eq:actri}) shows that the path $p_\Theta(a, b)$ is no longer than $|bu| + |us| + |sa| \le |ax| + |ay|$, so the lemma holds. (Note that the second claim of the lemma follows immediately from the fact that each of $au$ and $ub$ is no longer than $ax$.)

%%%%%%%%%%%%%%%%%%%%%%%%%%%%%%%%%Figure Begin
\begin{figure}[htpb]
\centering
\includegraphics[width=0.9\linewidth]{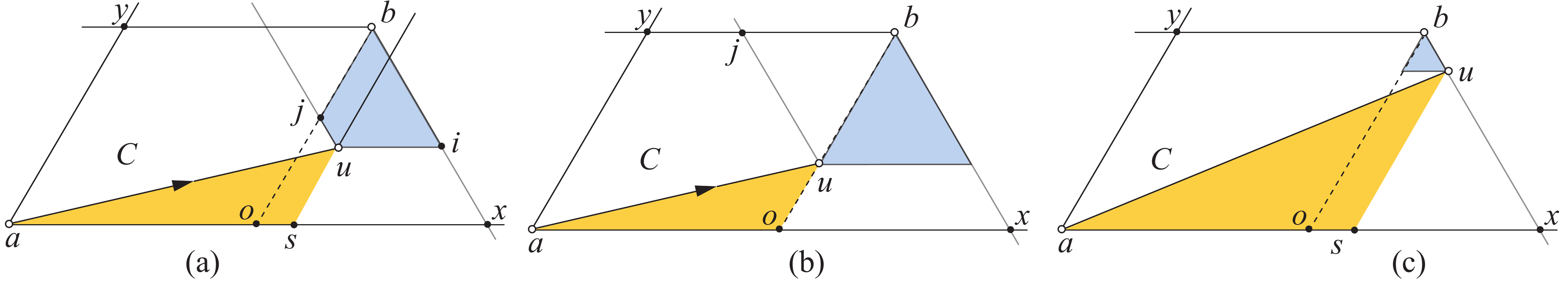}
\caption{Lemma~\ref{lem:thetapath}: $au \in \Theta_6$ intersects $bo$ (a) $u$ interior to $\triangle box$ (b) $u$ on $bo$ (c) $u$ on $bx$.}
\label{fig:trapezoid2}
\end{figure}
%%%%%%%%%%%%%%%%%%%%%%%%%%%%%%%%%Figure End

Assume now that $au$ intersects $bo$. Recall that $u$ must lie in the closed region $axby$, so in this case $u$ lies in the closed equilateral triangle $\triangle box$.
Consider first the situation in which $u$ lies strictly interior to $\triangle box$, as depicted in  Fig.~\ref{fig:trapezoid2}a. Let $i$ and $j$ be points on $bx$ and $bo$ respectively, such that $ui$ is parallel to $ax$ and $uj$ is parallel to $bx$. Note that the equilateral triangle obtained by removing $uibj$ from $T(b, u)$ is empty of points in $\Pt$, because it lies inside $T(a, u)$, which contains no points in $\Pt$.
This along with the fact that $|bu| < |ab|$ enables us to use the inductive hypothesis to claim the existence of a path $p_\Theta(b, u)$ from $b$ to $u$ of length
\begin{equation}
|p_\Theta(b, u)| < |bi| +|bj| = |bi| + |ui|.
\label{eq:pcb2}
\end{equation}
Because we seek undirected paths, we ignore the direction of the edges and let $p_\Theta(u, b) = p_\Theta(b, u)$. Substituting inequalities~(\ref{eq:pcb2}) and~(\ref{eq:actri}) in~(\ref{eq:aub}) yields
$|p_\Theta(a, b)| < (|as| +|ui|) + (|su| + |bi|) < |ax| + |bx|$. (Here we used the fact that $|bi| + |us| = |bi| + |ix| = |bx|$, and $|ui| \le |sx|$.) Also note that $|au| \le |ax|$, and each edge of $p_\Theta(b, u)$ is no longer than $|bi| \le |bx| < |ax|$ (by the inductive hypothesis). So the lemma holds for this case as well.

If $u$ lies on the line segment $bo$, then the equilateral triangle with side $bu$ lying inside $\triangle box$ coincides with one of $T(u, b)$ or $T(b, u)$. The first situation reduces to a special instance of the case depicted in Fig.~\ref{fig:trapezoid1}b, in which $j$ and $b$ coincide (so the trapezoid $biuj$ is really a triangle); the second situation reduces to a special instance of the case depicted in Fig.~\ref{fig:trapezoid2}a, in which $j$ and $u$ coincide. So the lemma holds for this case.

It remains to discuss the situation in which $u$ lies on the line segment $bx$, as depicted in Fig.~\ref{fig:trapezoid2}b. This situation occurs when $ab$ and $au$ are both candidates for the $\Theta_6$ edge selected in the cone $\cone$, and ties are broken in favor of $|au| \le |ab|$. Because $\arr{au} \in \Theta_6$, $T(a,u)$ is empty of points in $\Pt$. In particular, the equilateral triangle with side $ub$ that lies inside $T(a, b)$ (see the small shaded triangle in Fig.~\ref{fig:trapezoid2}b) is empty of points in $\Pt$. This triangle coincides with one of $T(u, b)$ or $T(b, u)$, so this is again a degenerate case in which the trapezoid $biuj$  reduces to the line segment $ub$. The inductive hypothesis applies here to show that $\Theta_6$ contains a path between $u$ and $b$ no longer than $|ub|$. This along with~(\ref{eq:actri}) and the fact that $|bu| + |us| = |bu| + |ux| = |bx|$ shows that the path $p_\Theta(a, b)$ is no longer than $|ax| + |bx|$. Also note that each of $au$ and $ub$ is no longer than $ax$, so the second claim of the lemma holds. This completes the proof.
\eproof

\subsection{Proof of Lem.~\ref{lem:thetapathbb}}
%Assume without loss of generality that $ab$ lies in the first cone $\cone_\Theta = \cone_{\Theta1}(a)$; if this is %not the case, then we can always rotate the point set $\Pt$ by a multiple of $\pi/3$ about $a$ so that our %assumption holds. (Note that the edge set for $\Theta_6$ remains unaltered by this rotation.) We further assume that %$ab$ lies below or alongside the bisector of $\cone_{\Theta}$; the situation in which $ab$ lies above the bisector of %$\cone_{\Theta}$ is symmetric with respect to the bisector. Refer to Fig.~\ref{fig:thetapathbb}.

First observe that $|ab'| \le |ab|$, because $\arr{ab'} \in Y_{6k}$, and $b$ and $b'$ lie in the same Yao cone $\cone_Y(a)$. This implies that $e$ lies on the line segment $ab$ (otherwise $ab'$ would be longer than $ab$, a contradiction). Also note that $\arr{ab} \in \Theta_6$ implies that the interior of $T(a,b)$ is empty of points in $\Pt$, so $b'$ must lie to the right of $T(a, b)$ and above $b$. (This latter claim follows from the fact that
$\ang{abb'}$ is acute because $|ab'| < |ab|$. Also note that $b'$ cannot lie on the boundary of $T(a, b)$, because ties in $\Theta_6$ are broken in favor of the edge of shorter Euclidean length.) It can be easily verified that $b \in \cone_{\Theta5}(b')$.
%%%%%%%%%%%%%%%%%%%%%%%%%%%%%%%%%Figure Begin
\begin{figure}[htpb]
\centering
\includegraphics[width=0.45\linewidth]{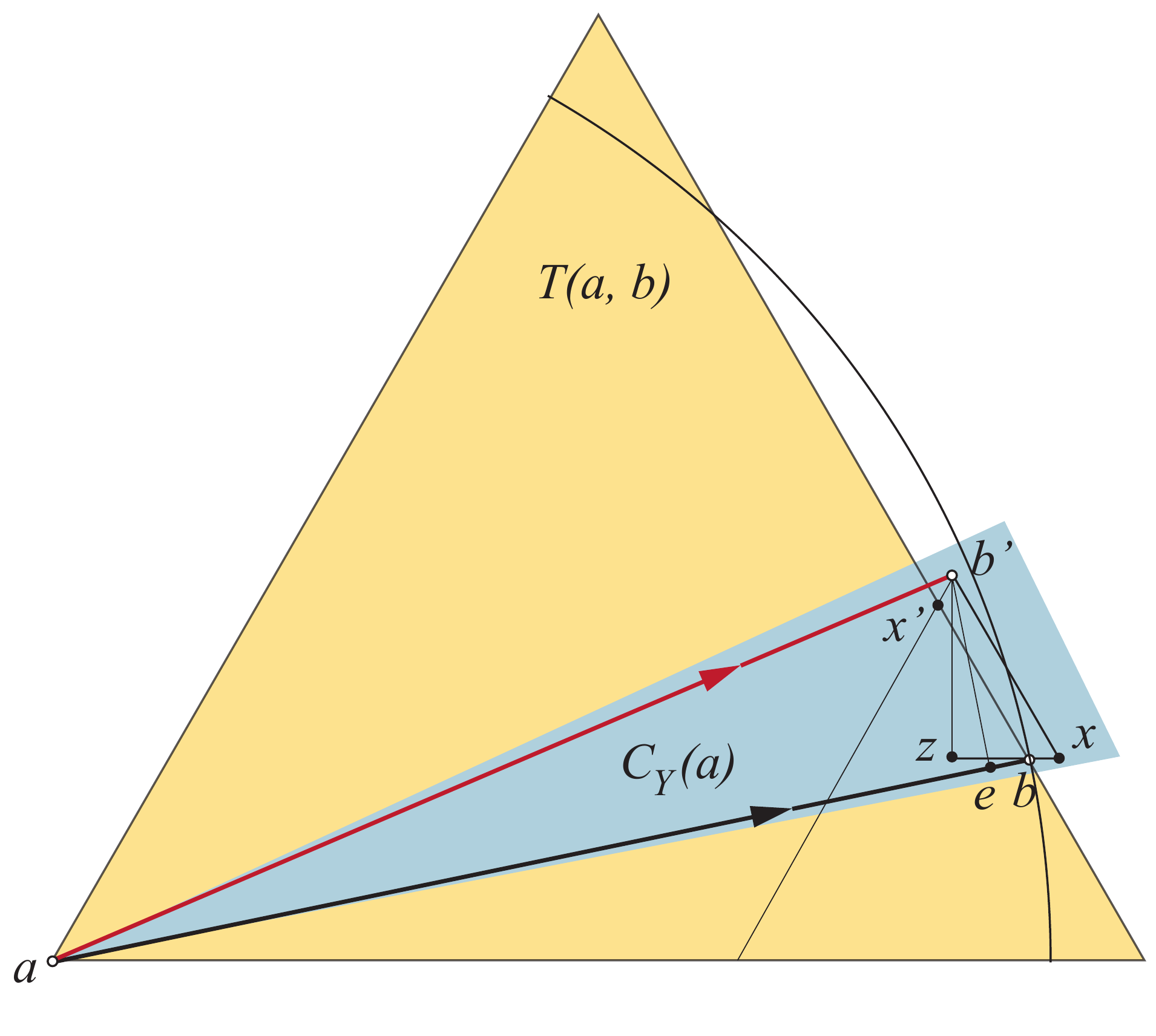}
\caption{Lem.~\ref{lem:thetapathbb}: $ab \in \Theta_6$, $ab' \in Y_{6k}$, $|p_\Theta(b,b')| \le \sqrt{3}|b'e|$.}
\label{fig:thetapathbb}
\end{figure}
%%%%%%%%%%%%%%%%%%%%%%%%%%%%%%%%%Figure End

Let $x$ be at the intersection between the line through $b'$ parallel to the right side of $T(a, b)$ and the horizontal line though $b$. Let $x'$ be at the intersection between the right side of $T(a, b)$ and the left ray of $\cone_{\Theta5}(b')$. Note that the equilateral triangle obtained after removing $bxb'x'$ from $T(b',b)$ lies inside $T(a, b)$ and therefore it is empty of points in $\Pt$.
%
%We now show that no edge in $\cone_{\Theta5}$-like cones crosses $x'b$. Assume to the contrary that there is an edge %$ij \in \Theta_6$ that lies in $\cone_{\Theta5}(i)$ and crosses $x'b$. Then $j$ must lie above $b$, otherwise the
%projection of $b$ on the bisector of $\cone_{\Theta5}(i)$ would be closer to $i$ than the projection of $j$ on the %bisector of $\cone_{\Theta5}(i)$, contradicting $ij \in \Theta_6$. But then $j$ would lie in $T(a,b)$, which is empty %of points in $\Pt$. So we have reached a contradiction. We conclude that no edge in $\cone_{\Theta5}$-like cones may %cross $x'b$.
%
This places us in the context of Lem.~\ref{lem:thetapath} so we can claim the existence of a path $p_\Theta(b',b)$ in $\Theta_6$ of length
\begin{equation}
|p_\Theta(b',b)| \le |b'x| + |b'x'| = |b'x| + |xb|
\label{eq:pbb0}
\end{equation}
Next we establish an upper bound on $|b'x|$ and $|xb|$ in terms of $|b'e|$. Let $z$ be the intersection point between the horizontal through $b$ and the vertical through $b'$. Then $\ang{zb'x} = \pi/6$, therefore $|b'x| = |b'z| / \cos(\pi/6) = 2|b'z|/\sqrt{3}$, and
$|xb| < |xz| = |b'z|\tan(\pi/6) = |b'z| / \sqrt{3}$. Summing up these two inequalities yields
$|b'x| + |xb| < |b'z|\sqrt{3}$. This along with the fact that $|b'z| \le |b'e|$ (the two terms are equal when $ab$ is horizontal) yields the upper bound stated by the lemma.

We now turn to the second claim of the lemma. By Lem.~\ref{lem:thetapath}, each edge of $|p_\Theta(b',b)|$ is no longer than $b'x$. Now note that $\ang{axb'} \ge \ang{bxb'} = \pi/3$ and $\ang{ab'x} > \ang{x'b'x} = \pi/3$. It follows that $\ang{b'ax} < \pi/3$. This along with the Law of Sines $|b'x|/\sin(\ang{b'ax}) = |ab'|/\sin(\ang{axb'})$ shows that $b'x$ is strictly shorter than $ab'$, which in turn is no longer than $ab$. This completes the proof.
\eproof

\subsection{Proof of Lem.~\ref{lem:thetapathaa1}}
Here we discuss only the case $a' \in \cone_{\Theta3}(a)$ depicted in Fig.~\ref{fig:thetapathaa1}; the case $a' \in \cone_{\Theta5}(a)$ is symmetric.
%%%%%%%%%%%%%%%%%%%%%%%%%%%%%%%%%%Figure Begin
%\begin{figure}[htpb]
%\centering
%\includegraphics[width=0.6\linewidth]{}
%\caption{Lem.~\ref{lem:thetapathaa1}: $ab' \in Y_{6k}$, $a'b' \in YY_{6k}$, $a' \in \cone_{\Theta3}(a)$, %$|p_\Theta(a,a')| <  4|a'b'|\sin(\alpha)/\sqrt{3}$.}
%\label{fig:thetapathaa1}
%\end{figure}
%%%%%%%%%%%%%%%%%%%%%%%%%%%%%%%%%Figure End
%
By Thm.~\ref{thm:theta6}, $\Theta_6$ contains a path $p_\Theta(a, a')$ no longer than $2|aa'|$. Thus we focus on bounding $|aa'|$.
First note that $\ang{ab'a'} < \alpha$ (because $ab'$ and $a'b'$ are in the same cone of angle $\alpha$), and $\ang{a'ab'} > \pi/3$ (because $\ang{a'ab'}$ includes the entire $\pi/3$-cone $\cone_{\Theta2}(a)$).
It follows that
$\sin(\ang{ab'a'}) < \sin(\alpha)$, and $\sin(\ang{a'ab'}) > \sin(\pi/3) = \sqrt{3}/2$. Substituting these inequalities in the Law of Sines applied on triangle $\triangle aa'b'$ yields
\[\frac{|aa'|}{\sin(\alpha)} < \frac{|aa'|}{\sin(\ang{ab'a'})} = \frac{|a'b'|}{\sin(\ang{a'ab'})} < \frac{2|a'b'|}{\sqrt{3}}.\]
This shows that $2|a'b'|\sin(\alpha)/\sqrt{3}$ is an upper bound for $|aa'|$, therefore $4|a'b'|\sin(\alpha)/\sqrt{3}$ is an upper bound for $|p_\Theta(a, a')| \le 2|aa'|$.
\eproof

\subsection{Proof of Lem.~\ref{lem:thetapathaa2}}
Recall that $\arr{ab} \in \Theta_6$ implies that $T(a, b)$ is empty of points in $\Pt$. This along with the fact that $a' \in \cone_{\Theta2}(a)$ implies that $a'b'$ crosses the left side of $T(a,b)$. Refer to Fig.~\ref{fig:thetapathaa2} throughout this proof.
%%%%%%%%%%%%%%%%%%%%%%%%%%%%%%%%%Figure Begin
\begin{figure}[htpb]
\centering
\includegraphics[width=\linewidth]{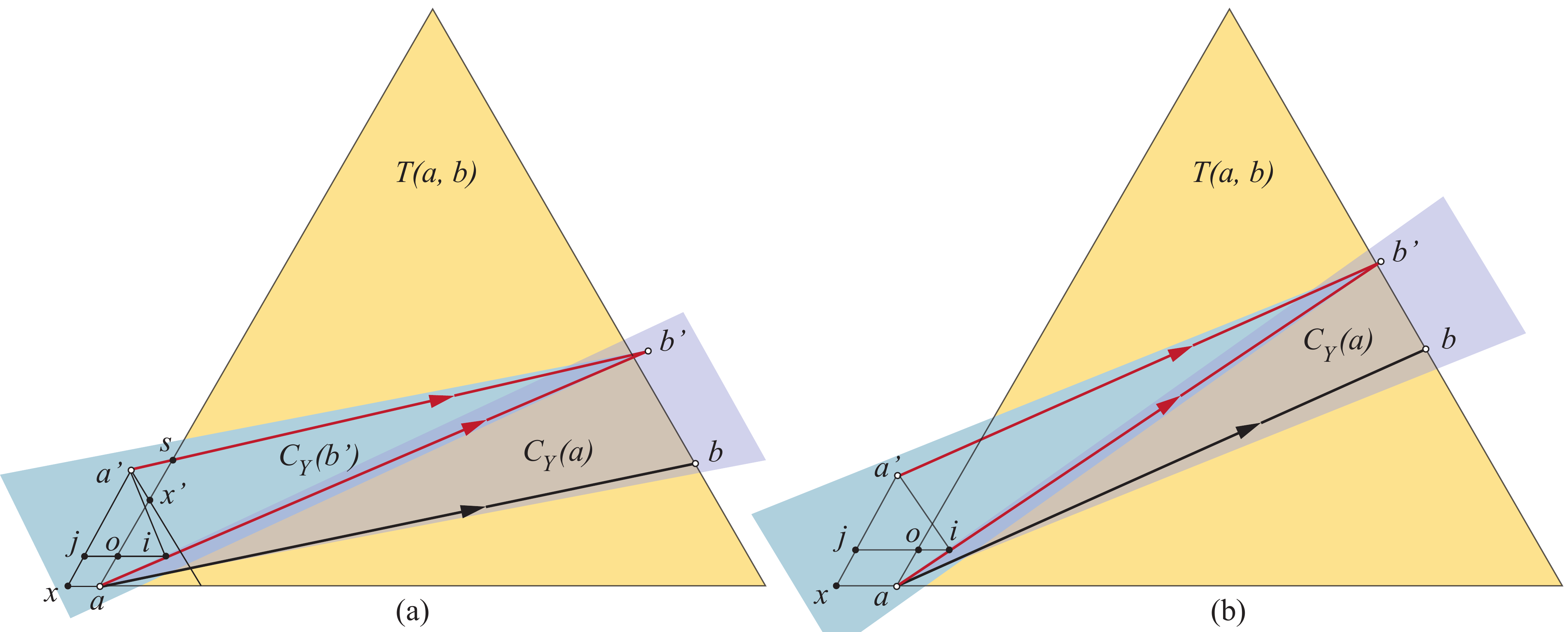}
\caption{Lem.~\ref{lem:thetapathaa2}: $ab \in \Theta_6$; $ab' \in Y_{6k}$; and $a'b' \in YY_{6k}$. (a) $ab'$ \emph{below} the bisector of $\cone_{\Theta1}(a)$ (b) $ab'$ \emph{above} the bisector of $\cone_{\Theta1}(a)$.}
\label{fig:thetapathaa2}
\end{figure}
%%%%%%%%%%%%%%%%%%%%%%%%%%%%%%%%%Figure End
Let $x$ be at the intersection between the line through $a'$ parallel to the left side of $T(a, b)$ and the horizontal line though $a$. By the lemma statement $a' \in \cone_{\Theta2}(a)$, therefore $a \in \cone_{\Theta5}(a')$. This implies that the right ray bounding $\cone_{\Theta5}(a')$ intersects the left side of $T(a, b)$; we call the intersection point $x'$.
(Note that if $a'$ lies on the boundary of $T(a, b)$, then the isosceles trapezoid $axa'x'$ degenerates to a line segment $aa'$; our arguments below apply to this scenario as well.)
The triangle obtained after removing $axa'x'$ from $T(a', a)$ lies inside $T(a, b)$, therefore it is empty of points in $\Pt$. This enables us to use
%Cor.~\ref{cor:thetacross} along with the fact that $\arr{ab} \in \cone_{\Theta1}(a)$ implies that no edge in %$\cone_{\Theta5}$-like cones crosses $ab$. This along with the fact that $T(a, b)$ is empty of points in $\Pt$ %implies that no edge in $\cone_{\Theta5}$-like cones may cross $x'a$. Thus the trapezoid $a'xax'$ meets the %conditions of
Lem.~\ref{lem:thetapath} and claim the existence of a path $p_\Theta(a',a)$ in $\Theta_6$ of length
\begin{equation}
|p_\Theta(a',a)| \le |a'x| + |a'x'| = |a'x| + |xa|.
\label{eq:paa1}
\end{equation}
Define three points $i$, $j$ and $o$ as follows: $i$ is the foot of the perpendicular from $a'$ on $ab'$; $j$ is the intersection point between the horizontal through $i$ and $a'x$; and $o$ is the intersection point between $ij$ and $ax'$. (If $a'$ is on the boundary of $T(a,b)$, then we let $a = x$, $a' = x'$ and $j = o$.) Note that $i$ must lie on the line segment $ab'$ -- otherwise, $a'b'$ would be longer than $ab'$, contradicting the fact that $\arr{a'b'} \in YY_{6k}$. We further expand the right side of~(\ref{eq:paa1}) into
\begin{equation}
|p_\Theta(a',a)| \le (|a'j| + |jx|) + |xa| = |a'j| + |oa| + |jo|.
\label{eq:paa1-1}
\end{equation}
We discuss two cases, depending on whether $ab'$ lies below or above the bisector of $\cone_{\Theta1}(a')$.

\paragraph{Case 1:} $ab'$ lies along or below the bisector of $\cone_{\Theta1}(a')$. (Refer to Fig.~\ref{fig:thetapathaa2}a). In this case, observe that
$a'i$ sits along or clockwise from $a'x'$ with respect to $a'$, because the line supporting $a'x'$ is orthogonal to the bisector of $C_{\Theta1}(a)$, and $ab'$ sits along or clockwise from the bisector with respect to $a$. Also note that $\ang{oia} \le \pi/6 \le \ang{oai}$. This along with the Law of Sines $|oa|/\sin(\ang{oia}) = |oi|/\sin(\ang{oai})$ implies $|oa| \le |oi|$. This enables us to further expand the term on the right side of inequality~(\ref{eq:paa1-1}) as follows:
\begin{equation}
|p_\Theta(a',a)| \le |a'j| + (|oi| + |jo|) = |a'j| + |ij|.
\label{eq:paa2}
\end{equation}
A similar analysis performed on triangle $\triangle a'ij$ shows that $\ang{ja'i} \le (\ang{a'ji} = \pi/3) \le \ang{a'ij}$, which along with the Law of Sines $|ij|/\sin(\ang{ja'i}) = |a'i|/\sin(\ang{a'ji})$ implies that
\begin{equation}
|ij| \le |a'i|.
\label{eq:paa-ij}
\end{equation}
Next we bound the term $|a'j|$ from inequality~(\ref{eq:paa2}) in terms of $|a'i|$ as well. Note that $a'i$, which is orthogonal to $ab'$, lies to the right of the vertical through $a'$, which in turn forms a $\pi/6$ angle with $a'j$. It follows that $\ang{ia'j} > \pi/6$ and $\ang{a'ij} < \pi/2$. This along with inequality~(\ref{eq:paa-ij}) and the Law of Cosines $|a'j|^2 = |a'i|^2 + |ij|^2 - 2|a'i||ij|\cos(\ang{a'ij})$
implies that
\begin{equation}
|a'j| < |a'i|\sqrt{2}
\label{eq:paa-a'j}
\end{equation}
Inequalities~(\ref{eq:paa2}),~(\ref{eq:paa-ij}) and~(\ref{eq:paa-a'j}) together yield
\begin{equation}
|p_\Theta(a',a)| < |a'i|(1+\sqrt{2})
\label{eq:paa3}
\end{equation}
Now note that $\ang{a'b'a} < \alpha$, which implies $\sin(\alpha) > \sin(\ang{a'b'a}) = |a'i| / |a'b'|$, or equivalently $|a'i| < |a'b'|\sin(\alpha)$. This along with inequality~(\ref{eq:paa3}) yields the upper bound
\begin{equation}
|p_\Theta(a',a)| < |a'b'|\sin(\alpha)(1+\sqrt{2}).
\label{eq:ub1}
\end{equation}
This upper bound matches the one claimed by the lemma when the $\max$ operator yields $\sqrt{2}$.

\paragraph{Case 2:} $ab'$ lies above the bisector of $\cone_{\Theta1}(a')$. (Refer to Fig.~\ref{fig:thetapathaa2}b;
we note that the relative position of $o$ and $x'$ on the upper ray of $C_{\Theta1}(a)$ is irrelevant to this case, so $x'$ is absent in Fig.~\ref{fig:thetapathaa2}b.)
Because $ab$ is below the bisector of $C_{\Theta1}(a)$ and $ab'$ is above the bisector, and because $\ang{bab'} < \alpha$, the inequalities $\pi/6 - \alpha \le \ang{oai} < \pi/6 < \ang{oia} \le \pi/6 + \alpha$ hold. Consequently,
$\sin(\pi/6 - \alpha) \le \sin(\ang{oai}) < \sin(\ang{oia}) \le \sin(\pi/6 + \alpha)$. This along with the Law of Sines $|oi|/\sin(\ang{oai}) = |oa|/\sin(\ang{oia})$ yields
\[
|oa| < |oi|\frac{\sin(\pi/6+\alpha)}{\sin(\pi/6-\alpha)}.
\]
We sum up $|oj|$ on both sides of the inequality above and substitute the result in~(\ref{eq:paa1-1}) to derive an upper bound
\begin{equation}
  |p_\Theta(a',a)| \le |a'j| + |oi|\frac{\sin(\pi/6+\alpha)}{\sin(\pi/6-\alpha)} + |oj| < |a'j| + |ij|\frac{\sin(\pi/6+\alpha)}{\sin(\pi/6-\alpha)}.
\label{eq:pcc2}
\end{equation}
(The latter inequality above uses the fact that $|oi| + |oj| = |ij|$ and $\sin(\pi/6+\alpha) > \sin(\pi/6-\alpha)$, for any $\alpha \le \pi/6$.)
Next we turn our attention to the triangle $\triangle a'ij$, to derive upper bounds for $|a'j|$ and $|ij|$. We have already established that $\pi/6 < \ang{oia} \le \pi/6 + \alpha$. This along with the fact that $\ang{a'ia} = \pi/2$ implies that $\pi/3 > \ang{a'ij} \ge \pi/3-\alpha$, which in turn implies that $\pi/3 < \ang{ja'i} \le \pi/3+\alpha$ (recall that $\ang{a'ji} = \pi/3$). This along with the Law of Sines
$|ij|/\sin(\ang{ja'i}) = |a'i|/\sin(\pi/3) = |a'j|/\sin(\ang{ja'i})$ implies that $|a'j| < |a'i|$ and
$|ij| < \frac{2}{\sqrt{3}}|a'i|\sin(\pi/3+\alpha) = \frac{2}{\sqrt{3}}|a'i|\cos(\pi/6-\alpha)$.
Substituting these inequalities in~(\ref{eq:pcc2}) yields
\begin{equation}
|p_\Theta(a',a)| < |a'i| + |a'i|\frac{2\sin(\pi/6+\alpha)}{\sqrt{3}\tan(\pi/6-\alpha)}.
\label{eq:pcc3}
\end{equation}
Because $\ang{a'b'a} < \alpha$, the inequality $|a'i| < |a'b'|\sin(\alpha)$ holds. Substituting this in~(\ref{eq:pcc3}), and combining the result with the upper bound from~(\ref{eq:ub1}) obtained for the first case,
yields the upper bound stated by the lemma.

We now turn to the second claim of the lemma. By Lem.~\ref{lem:thetapath}, no edge of $p_\Theta(a',a)$ exceeds $|a'x|$. Let $s$ be at the intersection between $a'b'$ and the upper ray of $C_{\Theta1}(a)$, and note that
$a'x$ is strictly shorter than $|as|$, which in turn is strictly shorter than $ab'$ (because $\ang{asb'}$ is obtuse). This, along with $|ab'| \le |ab|$, yields the second claim of the lemma.
\eproof

\subsection{Proof of Lem.~\ref{lem:thetapathaa3}}
Recall that $\arr{ab} \in \Theta_6$ implies that $T(a, b)$ is empty of points in $\Pt$. This along with the fact that $a'b'$ lies below $ab'$ implies that $a'b'$ crosses the bottom side of $T(a,b)$, as well as $ab$. Refer to Fig.~\ref{fig:thetapathaa3} throughout this proof.
%%%%%%%%%%%%%%%%%%%%%%%%%%%%%%%%%Figure Begin
\begin{figure}[htpb]
\centering
\includegraphics[width=0.5\linewidth]{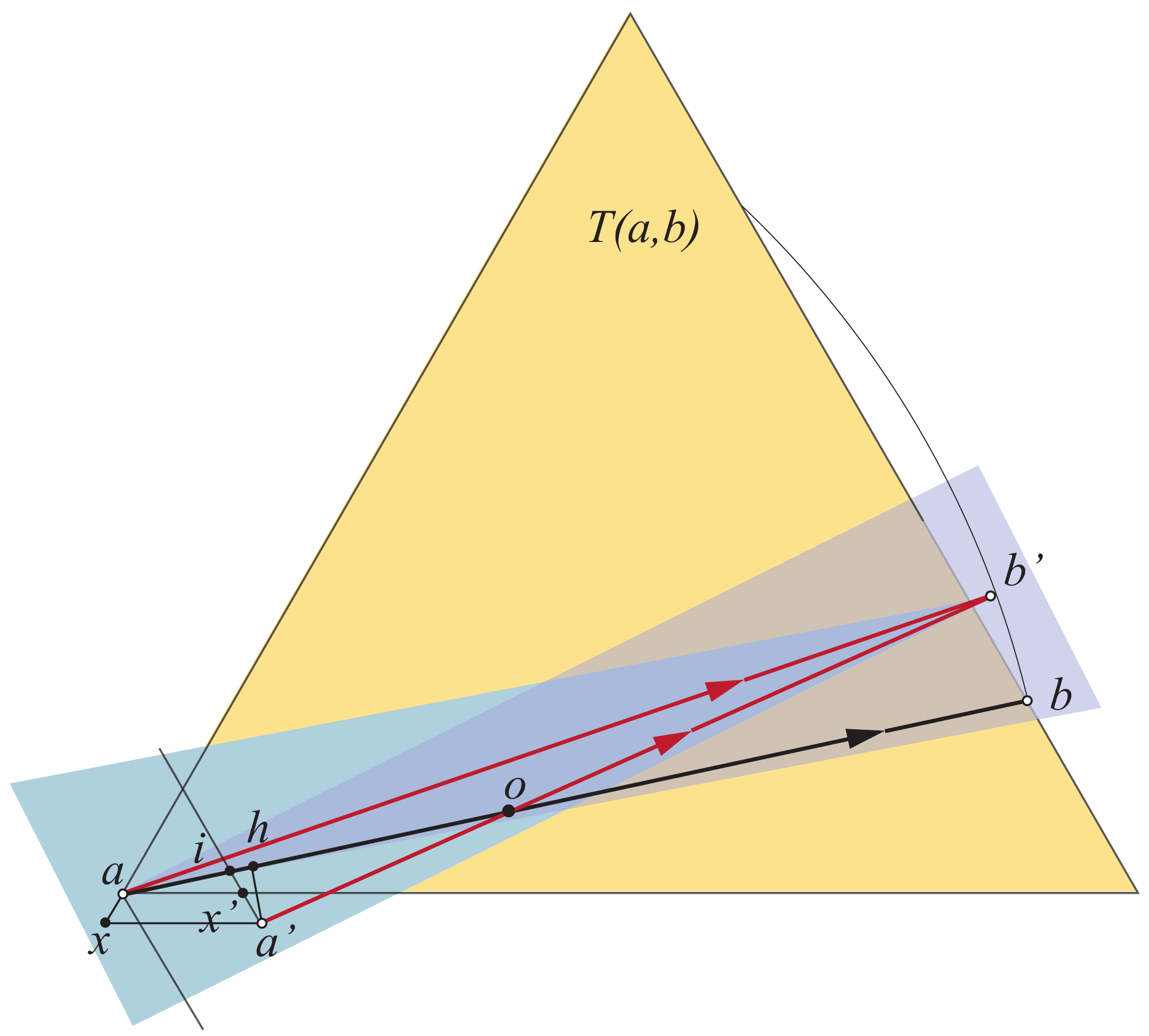}
\caption{Lem.~\ref{lem:thetapathaa3}: $ab \in \Theta_6$, $ab' \in Y_{6k}$, $a'b' \in YY_{6k}$, $a' \in \cone_{\Theta6}(a)$, $p_\Theta(a',a)$ is small.}
\label{fig:thetapathaa3}
\end{figure}
%%%%%%%%%%%%%%%%%%%%%%%%%%%%%%%%%Figure End
%
Because $|ab'| < |ab|$, $\ang{abb'}$ is acute, forcing $ab'$ to sit above $ab$. We now show that $h$ must lie on the line segment $ab$. To see this, consider the foot $f$ of the perpendicular from $a'$ on $ab'$ (not marked in Fig.~\ref{fig:thetapathaa3}, to avoid excessive labeling). The point $f$ must lie on the line segment $ab'$ --  otherwise $a'b'$ would be longer than $ab'$, contradicting $\arr{a'b'} \in YY_{6k}$. This, together with the fact that $a'h$ sits clockwise from  $a'f$ with respect to $a'$ (because $ab$ sits clockwise from $ab'$ with respect to $a$), shows that $h$ must lie to the right of $a$ as well.

Let $x$ be the intersection point between the horizontal through $a'$ and the left bounding ray of $\cone_{\Theta5}(a)$. Note that $a \in \cone_{\Theta3}(a')$, because $a' \in \cone_{\Theta6}(a)$ (by the lema statement). This implies that the upper bounding ray of $\cone_{\Theta3}(a')$ intersects the bottom side of $T(a,b)$; we label the intersection point $x'$. Note that the triangle obtained after removing $axa'x'$ from $T(a',a)$ lies inside $T(a, b)$ and therefore is empty of points in $\Pt$. This enables us to apply
%Cor.~\ref{cor:thetacross} along with the fact that $\arr{ab} \in \cone_{\Theta1}(a)$ implies that no edge in $\cone_{\Theta3}$-like cones crosses $ab$. This along with the fact that $T(a, b)$ is empty of points in $\Pt$ implies that no edge in $\cone_{\Theta3}$-like cones crosses $x'a$. Thus the trapezoid $axa'x'$ meets the conditions of
Lem.~\ref{lem:thetapath} and claim the existence of a path $p_\Theta(a',a)$ in $\Theta_6$ of length
\begin{equation}
|p_\Theta(a',a)| \le |a'x| + |a'x'|
\label{eq:paa0-1}
\end{equation}
Let $i$ be the intersection point between $ab$ and the line supporting $a'x'$. (Note that $i$ lies to the left of $h$, because $a'i$ is orthogonal to the bisector of $C_{\Theta1}(a)$, which lies counterclockwise from $ab$ with respect to $a$.) Clearly $|a'x'| \le |a'i|$. Now observe that $\ang{ia'h} \le \pi/6$; this is because the angle formed by $ia'$ with the vertical through $a'$ is precisely $\pi/6$, and $a'h$ lies along or to the left of the vertical through $a'$. This implies that $\cos(\pi/6) \le \cos(\ang{ia'h}) = |a'h|/|a'i|$, or equivalently $|a'i| \le 2|a'h|/\sqrt{3}$. This along with the fact that $|a'x'| \le |a'i|$ implies
\begin{equation}
|a'x'| \le 2|a'h|/\sqrt{3}.
\label{eq:a'x'2}
\end{equation}
The triangle inequality applied on $\triangle aix'$ tells us that $|ax'| < |ai| + |ix'|$, which substituted in $|a'x| = |a'x'| + |x'a|$ yields $|a'x| < |ai| + |ia'|$. This along with the triangle inequality $|ia'| < |ih| + |a'h|$ yields $|a'x| < |ah| + |a'h|$. Summing up this latter inequality with~(\ref{eq:a'x'2}), and substituting the result in~(\ref{eq:paa0-1}), yields the upper bound stated by the lemma.

Let $o$ be the intersection point between $a'b'$ and $ab$. For the second claim of the lemma, we use the fact that no edge of $p_\Theta(a',a)$ is longer than $a'x$ (by Lem.~\ref{lem:thetapath}), which in turn is strictly shorter than $ao$ where $|ao| < |ab|$.
\eproof

\section{Conclusions}
\label{sec:conclusions}
This paper establishes the first positive result regarding the spanning property of Sparse-Yao graphs (also known as Yao-Yao graphs). We show that, for any $k \ge 6$, the Sparse-Yao graph $YY_{6k}$ is a spanner with stretch factor $11.67$; the stretch factor drops down to $4.75$ for $k \ge 8$. We leave open our conjecture that $YY_{k}$ has constant stretch factor for any $k$ larger than a specific threshold value (no less than $6$).

\medskip
%\bibliographystyle{plain}
%\bibliography{spannerbib}

\def\cprime{$'$}

\end{document}